
\documentclass[preprint,12pt,3p]{elsarticle}





\usepackage{amssymb}

\usepackage[T1]{fontenc}
\usepackage{lmodern}
\usepackage{float}
\usepackage{calc}
\usepackage{textcomp}
\usepackage[hidelinks]{hyperref}
\usepackage{amsmath}
\usepackage{amssymb}

\usepackage{graphicx}
\usepackage{picins}
\usepackage{url}
\usepackage{caption}
\usepackage{epstopdf}
\usepackage{algorithm}
\usepackage{algpseudocode}
\usepackage{setspace}
\newtheorem{lemma}{Lemma}
\newtheorem{proof}{Proof}
\newtheorem{theorem}{Theorem}






\journal{Computers \& Security}

\begin{document}

\begin{frontmatter}

\title{Physical Layer Encryption using a Vernam Cipher}

\author[label1,label2]{Yisroel Mirsky\corref{cor1}}
\address[label1]{Georgia Institute of Technology, School of Computer Science, College of Computing}
\address[label2]{Ben-Gurion University, Department of Software and Information Systems Engineering}
\ead{yisroel@post.bgu.ac.il}
\cortext[cor1]{Corresponding author}

\author[label5]{Benjamin Fedidat}
\address[label5]{Jerusalem College of Technology, Department of Communications Engineering}
\ead{fedidat@g.jct.ac.il}

\author[label5]{Yoram Haddad}
\ead{haddad@jct.ac.il}

\begin{abstract}
	Secure communication is a necessity. However, encryption is commonly only applied to the upper layers of the protocol stack. This exposes network information to eavesdroppers, including the channel's type, data rate, protocol, and routing information. This may be solved by encrypting the physical layer, thereby securing all subsequent layers. In order for this method to be practical, the encryption must be quick, preserve bandwidth, and must also deal with the issues of noise mitigation and synchronization.

	In this paper, we present the Vernam Physical Signal Cipher (VPSC): a novel cipher which can encrypt the harmonic composition of any analog waveform. The VPSC accomplished this by applying a modified Vernam cipher to the signal's frequency magnitudes and phases. This approach is fast and preserves the signal's bandwidth. In the paper, we offer methods for noise mitigation and synchronization, and evaluate the VPSC over a noisy wireless channel with multi-path propagation interference. 
\end{abstract}


\end{frontmatter}


\section{Introduction}\label{sec:introduction}

Knowledge is power. It is in the interest of two communicating parties to secure their communication channel to the degree that no information about the channel or the communication is revealed. Today it is common practice to encrypt the payloads of higher level protocols in the communication protocol stack (\textit{Layer 4} and above in the OSI model). This is analogous to encrypting the content of a letter but not the envelope itself. Doing so allows onlookers (eavesdroppers) to see the frame's header information in plaintext and even modify it. This is an even more serious problem for radio transmissions over public domains \cite{physectut}. For instance, by targeting specific bits, an attacker is able to perform energy-efficient jamming \cite{EnergyEfJam}.

Solutions to this problem have been investigated thoroughly \cite{phySecBook,zhou2013physical,nichols2002wireless}. A common solution is to encrypt the data-link layer (\textit{Layer 2}) before it is passed on to the physical layer (\textit{Layer 1}). MACSec \cite{MACSec} is an example of such a protocol for multi-hop wired networks while the most common wireless encryption protocols are WEP and WPA/2. However, encrypting the \textit{Layer 2} bit-stream does not provide secrecy for all information obtainable from the physical channel's characteristics. Examples of this information include traffic statistics, data rates, number of physical channels, service priorities, data size, data packet frequency, baud rate, modulation, and channel bandwidth. Knowledge of this information can be used to infer the transmission equipment, message importance, channel content, and channel capacity.
This information leakage is analogous to writing letters in code and mailing them to a friend. One who sees these letters cannot explicitly determine the content, however the shape (bandwidth), address (protocol) and transmission frequency (bit-rate) can reveal significant information.

Therefore, in order to make a communication channel completely secure, the channel in its entirety should be protected, like a curtain over the entire operation. By extension, it should also be impossible to determine whether the intercepted signal was originally of a digital or analog origin. This level of security can only be achieved by acting at \textit{layer 1} of the OSI protocol stack. For this reason, research on the topic of physical layer security (\textit{Layer 1}) has gained attention over the years \cite{phySecBook}.

In this paper, we propose an application of the Vernam Cipher to analog signals, which we call the Vernam Physical Signal Cipher (VPSC). The VPSC is unique because it encrypts waveform signals on the frequency domain while achieving a high degree of secrecy on \textit{Layer 1}.

There are several notable advantages to working on the frequency domain:
\begin{singlespace}
	\begin{enumerate}
		\item
		\textbf{Complete Information Privacy}: By encrypting the raw signal itself, no information about the channel is exposed. Regardless of waveform, the encrypted signal appears as white noise.
		\item
		\textbf{Bandwidth Preservation}: This aspect is particularly desirable for radio applications where spectrum is a commodity. This is in contrast to performing modulo-based encryption on samples from the temporal plane, since doing so adds energy to all frequencies in the spectrum. In order to preserve the original signal's bandwidth, transformation in the frequency domain is necessary.
		\item
		\textbf{Selective Band Encryption}: This process is similar to a band-pass filter since an entire signal's spectrum can be presented to the VPSC, but only selected frequency bands will be encrypted (regardless of the bands' content).
		\item
		\textbf{Hardware Parallelization}: The signal's spectrum can be split and then encrypted in parallel by independent processors in real-time. This is useful when dealing with very large bands and weak processors. This modularity makes the technology highly scalable to the consumer's needs. When operating directly on the temporal plane, this type of parallelization cannot be achieved when targeting specific frequency bands.
	\end{enumerate}
\end{singlespace}

Although some of these advantages are present in current wireless security channels, the advantages altogether are unavailable in any single one \cite{nichols2002wireless,zhou2013physical,physectut}. The aim of the VPSC is to provide a secure connection (one that does not leak any information, even about the channel itself) between two communicating parties over a single physical link --such as a radio channel or an optical trunk line. The VPSC can also be applied to a multi-hop network if each link is protected separately, and the routing nodes are considered trusted. 

Altogether, this paper has three main contributions:
\begin{singlespace}
	\begin{enumerate}
		\item
		\textbf{A Generic Physical Layer Cipher}: A method for applying the One-Time Pad \cite{shannon49:secrecy} / Vernam Cipher \cite{vernam19:otp} to the frequency domain, enabling all of the advantages listed above.
		\item
		\textbf{Noise Mitigation for Modulo Operation on Waveforms}: In the digital domain (\textit{Layer 2} and above), modulo-based encryption is impacted by noise indirectly. Therefore we propose two noise mitigation techniques which enable modulo-based signal encryption (on both the time and frequency domains), and a method which combines the two into one. These noise mitigation techniques are necessary because modulo operations on analog signals are extremely sensitive to the presence of noise.
		\item
		\textbf{Wireless Simulations \& Source Code}: We evaluate the VPSC's practicality by simulating its application in a realistic wireless channel based on Rayleigh fading. We also provide the Python source code to the simulator of the VPSC for other researchers to reproduce our work.\footnote{URL redacted for anonymity.}
		
	\end{enumerate}
\end{singlespace}

The remainder of the paper is organized as follows: in Section~\ref{sec:relatedwork}, a brief history and related works are presented, in Section~\ref{sec:cryptomodel}, the cryptographic model and the notations needed to describe the VPSC are presented. In Section~\ref{sec:thevpsc}, the VPSC is described and two noise mitigation techniques are proposed.
In Section~\ref{sec:cryptanalysis}, a cryptanalysis is offered. In Section~\ref{sec:sigsync}, a direct encrypter-decrypter synchronization algorithm is proposed. In Section~\ref{sec:performance} we discuss the complexity of the algorithm, and in Section~\ref{sec:simu}, we offer an evaluation of the VPSC through simulations and experimental results.
In Section~\ref{sec:discussion}, we discuss various aspects of the VPSC along with future work.
Finally, in Section~\ref{sec:conclusion}, we summarize our findings and present our conclusion.

\section{Related Work}
\label{sec:relatedwork}
In 1919, Gilbert Vernam patented a XOR-based cipher known as the Vernam Cipher \cite{vernam19:otp}. This cipher works by applying the XOR operation between a message and a secret pseudo-random key. In 1949, Claude Shannon published a historical paper \cite{shannon49:secrecy} in which communication secrecy was studied from the perspective of information theory. He proved the theoretical significance of the Vernam Cipher and proposed the one-time pad (OTP), also known as the Shannon Cipher System (SCS), a cipher capable of perfect secrecy. The OTP is essentially a Vernam Cipher which uses a truly random key.

Later in 1975, Wyner wrote his seminal paper that describes a degraded wiretap channel and provides information-theoretic concepts needed for the domain \cite{wyner1975wire}. Loosely speaking, a wiretap channel (WTC) is where the sender (Alice) transmits a signal to the legitimate receiver (Bob) while an eavesdropper (Eve) intercepts it. However, the signal Eve intercepts is noisier than Bob's, allowing Bob to obtain information that Eve cannot.

We can relate the SCS to the WTC because both can be directly applied to quantized signal samples. However, there is an inherent difference between the SCS and the WTC. The SCS's secrecy is solely based on  information theory, while the WTC's secrecy is based on exploiting the physical traits of communication media \cite{wyner1975wire,CsiszarKorner}. Though the WTC has a trade-off between the underlying channel capacity (secrecy capacity) and the channel's secrecy, its main advantage over the SCS is that it does not necessarily require a secret key. This is desirable in comparison to the SCS, which requires a key with the same length as the message stream. Although this is a technical impracticality, pseudorandom bit generation can be used to create a cryptographically secure key stream \cite{menezes1996handbook,SurveyOnKeyAlgs} (discussed in Section~\ref{sec:cryptanalysis}).

In \cite{WiretapCapwithNoise}, the authors find the secrecy capacity of a shared key WTC in the presence of noise. Later, the authors in \cite{wiretapWithSecKey} generalized \cite{WiretapCapwithNoise} by considering any channel (not necessarily noisy). It can be seen in these works that a secret key WTC without the presence of noise and with maximum secrecy is essentially an SCS. Therefore, in our work, we focus on the SCS since its theory forms the basis for the VPSC.

In \cite{articleLogscram}, the author proposes two methods for encrypting a physical signal: amplitude log masking (ALM) and sample-wise RSA encryption. In ALM, each signal sample in the time-plane is encrypted by taking the logarithm of the sample multiplied by a random value (a key). However, the ALM method does not provide a high degree of secrecy since ALM simply masks (obfuscates) the samples with a key, and attacks such as correlation analysis  \cite{jo10:crackingdsss} can be used to reveal the masked message. Moreover, ALM is very sensitive to noise since errors are exponentially multiplied during the decryption process. This makes ALM impractical to use in noisy channels.

In the RSA method, the RSA \cite{rivest1978method} cipher is applied to each sample from the time-plane. The RSA method requires that the samples be quantized to discrete values. This is necessary in order to perform the power operation of RSA without float-point overflows. As a result, the RSA method is highly susceptible to even the slightest amount of noise. This is because RSA has a non-linear relationship between the cipher-text and the plain-text. As a consequence, every single rounding error in the cryptogram results in a  completely different deciphered value. This behavior is similar to how the output of a hash map is sensitive to changes in its key. This means that the RSA method is not a practical solution for real-world channels. 

Both the ALM an RSA methods use and a significant amount of energy over the entire spectrum. These are undesirable side effects, especially for wireless channels. The VPSC, on the other hand, uses the same amount of bandwidth as the original signal, and a similar amount of energy as well. Furthermore, the VPSC is much more robust to noise and interference, since the modulo operation of the SCS maps noise close to the original sample's value (discussed further in Section~\ref{sec:simu}).

In \cite{audioFFTscramble}, the authors propose frequency component scrambling (FCS) via a Fast Fourier Transform (FFT) to protect audio channels. Their method is to scramble the frequency components (f.c.) of the given signal to obfuscate its contents. However, scrambling does not provide a high level of secrecy. Regardless of the number of f.c.s, it is possible to descramble the signal by analyzing the correlation of the f.c. magnitudes, similarly to what was done in \cite{jo10:crackingdsss}. For example, if a 16-QAM modulation is applied to the carrier frequency $f_c$, FCS is used to encrypt the band surrounding $f_c$, then FCS would simply move the contents of $f_c$ to a neighboring bin (see Fig.~\ref{fig:spectrum}). This is similar to frequency hopping except applied to a much smaller band, and with only one hop. Therefore, FCS is only applicable to the prevention of casual eavesdroppers. 

Fig.~\ref{fig:spectrum} shows the spectrum of an encrypted 16-QAM signal carried on a 1.9MHz wave, using each of the methods. It can be seen that the methods either do not sufficiently secure the channel (FCS), or use the entire spectrum with a large amount of energy (ALM and RSA).

Therefore, to the best of our knowledge, the VPSC is the first physical layer encryption system capable of perfect secrecy that operates directly on the frequency domain, and is also robust to noise.

\begin{figure}
	\centering
	\includegraphics[width=\columnwidth]{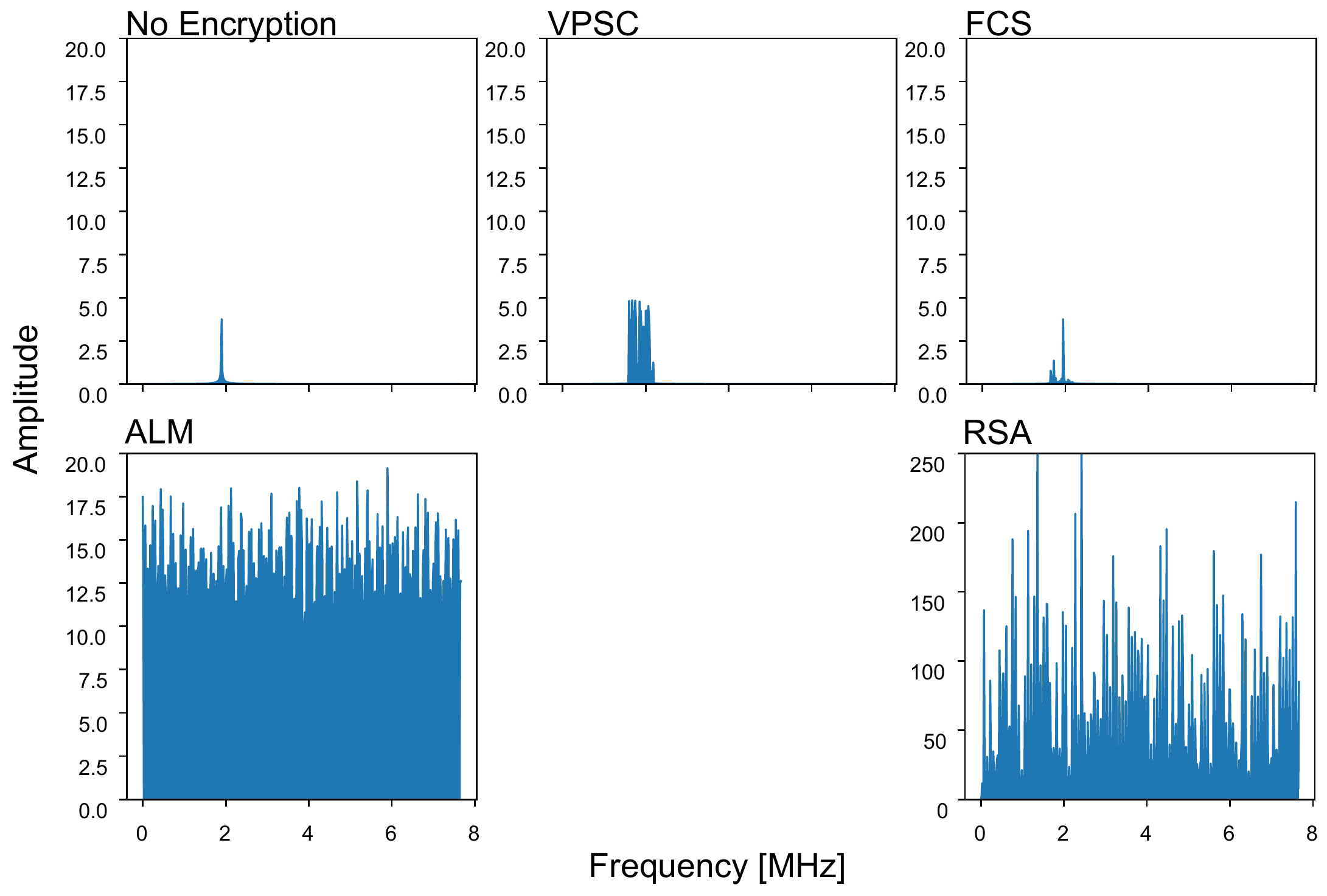} 
	\caption{The spectrum of a 16-QAM channel, with a carrier wave of 1.9 MHz, encrypted by the various methods.}
	\label{fig:spectrum}
\end{figure}

\section{Cryptographic Model}
\label{sec:cryptomodel}
In this Section, we introduce the notations which will be used throughout
this paper. We also apply the standard cryptographic model to the
waveform message space.

In this paper, we assume that we are dealing with discretely sampled
and quantized real signals. Let there be $Q$ discrete levels of
quantization such that the highest level is $Q_U$ and the lowest is $Q_L$.
Let $S \subseteq \mathbb{Z}^N $ be the collection of all possible signal segments
to be encrypted from some particular channel $T$, where $N = 2^k, k \in \mathbb{N}$ and
$\vec{s} \in S$ has the form
\begin{equation}
\vec{s} = \begin{bmatrix} v_1 \\ \vdots \\ v_N \end{bmatrix}
\end{equation}
where $Q_L \leq v_i \leq Q_U$ for every $i$. In other words,
$\vec{s}$ is a frame of $N$ quantized samples, taken from the time domain, which we want to encrypt.

Let $f_s$ be the sample rate of the system such that $f_s \geq 2B$,
where $B$ is the essential bandwidth of the selected signals from $S$ (Nyquist rate).
Let the message space $M$ (the collection of all possible plaintext messages)
of the cryptosystem be defined as the collection of all discrete Fourier
transforms (DFT) of the vectors in $S$, in polar form, such that
\begin{equation}
M = \left\{ \left( \vec{m}_m, \vec{m}_a \right) \middle|
\begin{matrix} \vec{m}_m = | DFT [ \vec{s} ] | \\
\vec{m}_a \angle DFT [ \vec{s} ] \end{matrix}\right\}
\end{equation}
It is helpful to view the message $\vec{m} \in M$ as the polar
form of the DFT of $N$ consecutive samples of a real-time signal segment found in $S$. In other words, $\vec{m}_m$ represents the frequency magnitudes and $\vec{m}_a$ denotes the frequency angles (phases) of the signal segment $\vec{s}$.

Let $\phi$ be a scalar parameter which defines the maximum frequency
magnitude of the cryptosystem. It is restricted to the inequality
\begin{equation}
\label{eq:phi}
\phi \geq max ( \vec{m}_m [ i ] ), \forall i, \vec{m}_m
\end{equation}
\vspace{3pt}
Let the key space $K$ (the collection of all possible keys) of the cryptosystem be defined as a collection of all possible tuples in the form
\begin{equation}
\vec{k} = \left( \vec{k}_m, \vec{k}_a \right)
\end{equation}
where $\vec{k}_m$ and $\vec{k}_a$ are random $N$-length
vectors which are used to encrypt magnitudes and angles respectively. Since we
are dealing with real signals, $\vec{k}_m$ and $\vec{k}_a$ must be structured
to conform to the DFT output from real signals.

\begin{subequations}
	Specifically, $\vec{k}_m$ has the structure
	\begin{equation}
	\label{eq:magkeys}
	\vec{k}_m = concatenate \left( 0, \vec{v}, \texttt{mirror}
	\left( \vec{v} \left[ 2 : \frac{N}{2} \right] \right) \right)
	\end{equation}
	where $\vec{v}$ is a $^N/_2$ length vector of random values
	on the range $[ 0, \phi )$, \texttt{refl} is the mirror rearrangement
	operation on the values of some vector, and the symbol ``:'' indicates a range of
	indexes.
	Similarly, $\vec{k}_a$ has the structure
	\begin{equation}
	\vec{k}_a = concatenate \left( \vec{a}, - \texttt{refl}
	( \vec{a} ) \right)
	\end{equation}
	where $\vec{a}$ is also a $^N/_2$ length vector of random values,
	but on the range $[-\pi,\pi)$.
\end{subequations}

Let the key space $K$ (the collection of all possible keys) of the
cryptosystem be defined as the collection of all possible $\vec{k}$.

The cryptogram space of the system $C$ is equivalent to the collection of all
possible real signals found in $M$. This is necessary in order to obtain perfect
secrecy, since it must be possible to map any cryptogram $\vec{c} \in C$
back to any message $\vec{m} \in M$ \cite{shannon49:secrecy}. 

\begin{singlespace}
	Let the inverse-$DFT$ ($DFT^{-1}$) of the cryptogram $\vec{c}$ be
	referred to as $\vec{s}'$, such that
	\begin{equation}
	\label{eq:reversedft}
	DFT^{-1} ( \vec{c} ) = \vec{s}'
	\end{equation}
	
	\begin{subequations}
		Let the general encryption function be defined as
		\begin{equation}
		\label{eq:encrypt}
		e_{\vec{k}}(\vec{m}) = \vec{c}
		\end{equation}
		and the general decryption function be defined as
		\begin{equation}
		\label{eq:decrypt}
		d_{\vec{k}}(\vec{c}) = \vec{m}
		\end{equation}
		where the key $\vec{k} \in K$ is used to encrypt the message
		$\vec{m} \in M$ and decrypt the ciphertext $\vec{c} \in C$.
	\end{subequations}
\end{singlespace}
Now that some notation has been defined, it is possible to present the
cryptographic model used in this paper, as depicted in
Fig.~\ref{fig:aliceandbob}.
\begin{figure}
	\centering
	\includegraphics[scale=0.35]{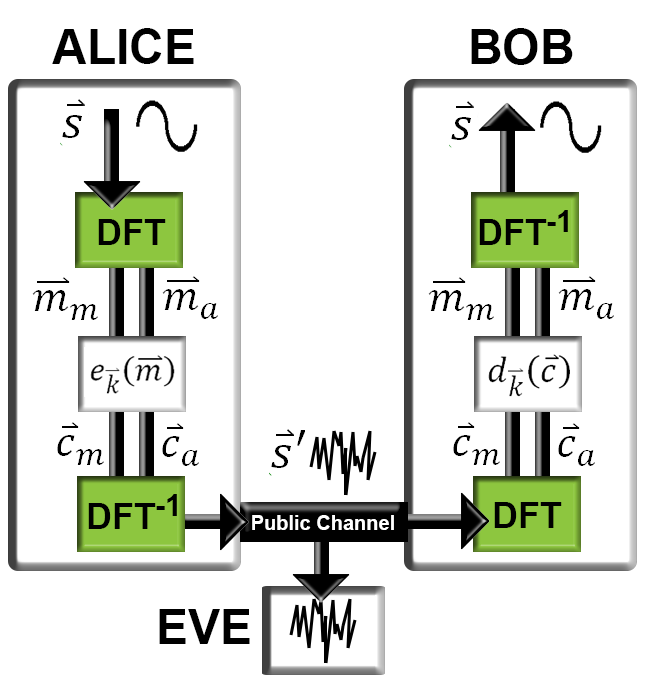} 
	\caption{The cryptosystem model for analog signals. Alice sends a waveform
		signal to Bob, and Eve's interception provides her with no information about
		it.}
	\label{fig:aliceandbob}
\end{figure}
Consider a case in which
Alice wants to transmit $\vec{s}$ (a segment of some analog signal)
securely to Bob so that Eve cannot obtain any information about $\vec{s}$
as it travels across the public medium. First, Alice obtains the tuple
$\vec{m} = (\vec{m}_m, \vec{m}_a )$ by converting
the $DFT$ of $\vec{s}$ into polar form. Next, Alice encrypts the
frequency components ($\vec{m}$) with~\eqref{eq:encrypt} by performing
$e_{\vec{k}} ( \vec{m} )= \vec{c}$. Finally,
Alice performs a $DFT^{-1}$ on the cryptogram $\vec{c}$ and transmits
the result $\vec{s}'$ over the public medium towards Bob.

Once Bob has received $\vec{s}'$ he can obtain the original $\vec{s}$ from it
by performing the same steps which Alice performed, while using the decryption
function~\eqref{eq:decrypt} instead.

This process is repeated continuously in real-time for each set $N$ samples that Alice wishes to send. However, each key $\vec{k}$ is selected at random from $K$ (key generation is further described in the next Section).

\section{The Vernam Physical Signal Cipher}
\label{sec:thevpsc}
In this Section, we define the VPSC by detailing its implementation
of the encryption and decryption functions
(\ref{eq:encrypt},~\ref{eq:decrypt}). We also introduce two methods of
noise mitigation which are essential for the VPSC to work in the real world. Afterwards, we review possible key-sharing options.

\begin{subequations}
	Let the VPSC encryption function be defined as
	\begin{align}
	\label{eq:vpscencrypt}
	\begin{split}
	e_{\vec{k}}(\vec{m}) =
	\left\{ \substack{e_{\vec{k}_m} \left(\vec{m}_m\right) + \lambda \\
		e_{\vec{k}_a} \left(\vec{m}_a\right) } \right\} =
	\left\{ \substack{ \left(\left( \vec{m}_m + \vec{k}_m \right)
		\bmod \phi \right)+\lambda \\ \vec{m}_a + \vec{k}_a } \right\} =
	\left\{ \substack{ \vec{c}_m + \lambda \\ \vec{c}_a } \right\} =
	\vec{c}
	\end{split}
	\end{align}
	and let the VPSC decryption function be defined as
	\begin{align}
	\label{eq:vpscdecrypt}
	\begin{split}
	d_{\vec{k}}(\vec{c}) =
	\left\{ \substack{d_{\vec{k}_m} \left(\vec{c}_m\right)-\lambda \\
		d_{\vec{k}_a} \left(\vec{c}_a\right)} \right\} =
	\left\{ \substack{ \left(\left( \vec{c}_m - \vec{k}_m \right)
		\bmod \phi \right) - \lambda\\ \vec{c}_a - \vec{k}_a } \right\} =
	\left\{ \substack{ \vec{m}_m  - \lambda\\ \vec{m}_a } \right\} =
	\vec{m}
	\end{split}
	\end{align}
	where $\bmod$ is the element-wise modulo operation,
	$\vec{k}$ is a purely random vector selected from $K$, and $\lambda$ is a required amplification of the encrypted signal. The parameter $\lambda$ is a constant defined by the user such that $\lambda > 0$. The purpose of $\lambda$ is to ensure that the phase of component $n$ will not be lost, in the chance that $\vec{c}_m[n] \approx 0$. This can legitimately occur at random, based on $\vec{k}_m$. We note that $\lambda$ does not affect the secrecy of $\vec{c}$ because we are simply amplifying the final signal.
\end{subequations}

\subsection{Noise Mitigation}
Since the VPSC is a physical signal cipher, it must operate according
to physical constraints. One of those is $\phi$; the maximum frequency magnitude of the cryptosystem. This parameter must be at least
as large as the largest possible frequency magnitude in $M$ as described in
\eqref{eq:phi}. Use of a value for $\phi$ which is less than the largest
magnitude will result in a loss of information due to the modulo operation.

The functions (\ref{eq:vpscencrypt},~\ref{eq:vpscdecrypt}) can provide a high level of security since they are essentially an OTP (discussed later in Section~\ref{sec:cryptanalysis}). However, their
implementation in reality (as-is) does not function. This is because every communication medium adds some noise to the channel, whether it is natural noise or some other signal interference. Therefore, under normal circumstances, some energy always gets added or subtracted from some of the frequency magnitudes in $\vec{c}_m$.
This incurs an undesirable effect in the decryption process. 

Depending on
the amount of energy, the subtraction and then modulo of the cryptogram in
\eqref{eq:vpscdecrypt} can send values in $\vec{m}_m$ that were close to
$0$ or $\phi$ to the opposite extreme.

\begin{subequations}
	To illustrate this issue we can track the usage of the VPSC over some noisy channel. Let's say that $\vec{m}_m [n]$ is the
	n\textsuperscript{th} frequency magnitude from the original signal segment $\vec{s}$,
	and that $\vec{m}_m [n] = \phi - \varepsilon$, where
	$\varepsilon$ is some relatively small number. Suppose that the encryption
	key to be used on the magnitude $\vec{m}_m [n]$ is
	$\vec{k}_m [n] = \alpha$, where $0 \leq \alpha < \phi$.
	Encrypting $\vec{m}_m$ with~\eqref{eq:vpscencrypt} results in:
	\begin{equation}
	\vec{c}_m [n]
	= \left( \vec{m}_m [n] + \vec{k}_m [n] \right) \bmod \phi
	= (\phi - \varepsilon + \alpha) \bmod \phi
	\end{equation}
	Now $\vec{c}_m$ is converted into $\vec{s}'$ by
	\eqref{eq:reversedft} and transmitted over the communication medium. Assume that by doing so, the magnitude $\vec{c}_m [n]$ (of
	$\vec{s}'$) receives some additional energy $\gamma$ from noise
	in the channel, where $\gamma > \varepsilon$ . The result is that Bob
	now receives a noisy cryptogram,
	\begin{equation}
	\label{eq:noisyc}
	\vec{c}_m[n]^* = (\phi - \varepsilon + \alpha) \bmod \phi + \gamma
	\end{equation}
	and Bob cannot analytically determine $\gamma$ since the original message
	magnitude $\vec{m}_m [n]=\phi - \varepsilon$ is unknown to him. When Bob tries to decrypt~\eqref{eq:noisyc} using~\eqref{eq:vpscdecrypt},
	assuming $\gamma - \varepsilon < \phi$ the following will occur:
	\begin{equation}  \label{eq:anomalypt1}
	d_{\vec{k}_m[n]} \left( \vec{c}_m [n]^* \right)
	= \left( \vec{c}_m[n]^* - \vec{k}_m[n] \right) \bmod \phi
	\end{equation}
	
	If $\alpha < \gamma$ then~\eqref{eq:anomalypt1} is evaluated to
	\begin{align}
	\label{eq:anomalycase1}
	(\phi - \varepsilon + \alpha + \gamma - \alpha) \bmod \phi = (\phi - \varepsilon + \gamma) \bmod \phi = \gamma - \varepsilon \text{ since } \gamma > \varepsilon
	\end{align}
	
	If $\alpha \geq \gamma$ then~\eqref{eq:anomalypt1} evaluated to
	\begin{equation} \label{eq:anomalycase2}
	(\alpha - \varepsilon + \gamma - \alpha) \bmod \phi = (\gamma - \varepsilon) \bmod \phi
	= \gamma - \varepsilon.
	\end{equation}
	
	In both cases (\ref{eq:anomalycase1},~\ref{eq:anomalycase2}), Bob will
	interpret $\vec{s}$'s n\textsuperscript{th} frequency magnitude to be a near zero value
	as opposed to the correct near maximum value ($\phi$). Similarly, the same issue can
	be shown for near-zero values being interpreted as maximum values as well.
	
	These unavoidable errors add a tremendous amount of noise to the decrypted
	signal. Therefore, since a small amount of noise energy $\gamma$ can cause a
	large signal to noise ratio (SNR), it is impractical to implement the VPSC as-is by
	simply using the encryption and decryption functions (\ref{eq:vpscencrypt},
	\ref{eq:vpscdecrypt}) without any noise mitigation.
	Therefore, we propose two methods of noise mitigation for the VPSC:
	preemptive-rise and statistical-floor.
\end{subequations}

\subsubsection{Preemptive-rise}
The preemptive-rise (PR) technique is implemented both in the encrypter
(transmitter) and decrypter (receiver). The idea is to make a buffer zone
above and below the original signal's range of frequency magnitudes. This ensures that the addition
of random noise will not cause any of the magnitudes to fall out of bounds
during the subtraction step of~\eqref{eq:vpscdecrypt} as depicted in
Fig.~\ref{fig:prmethod}. This is not a conventional signal boost since non-relevant frequencies within the encrypted band will be boosted as well.

\begin{figure}[h]
	\centering
	\includegraphics[width=0.8\columnwidth]{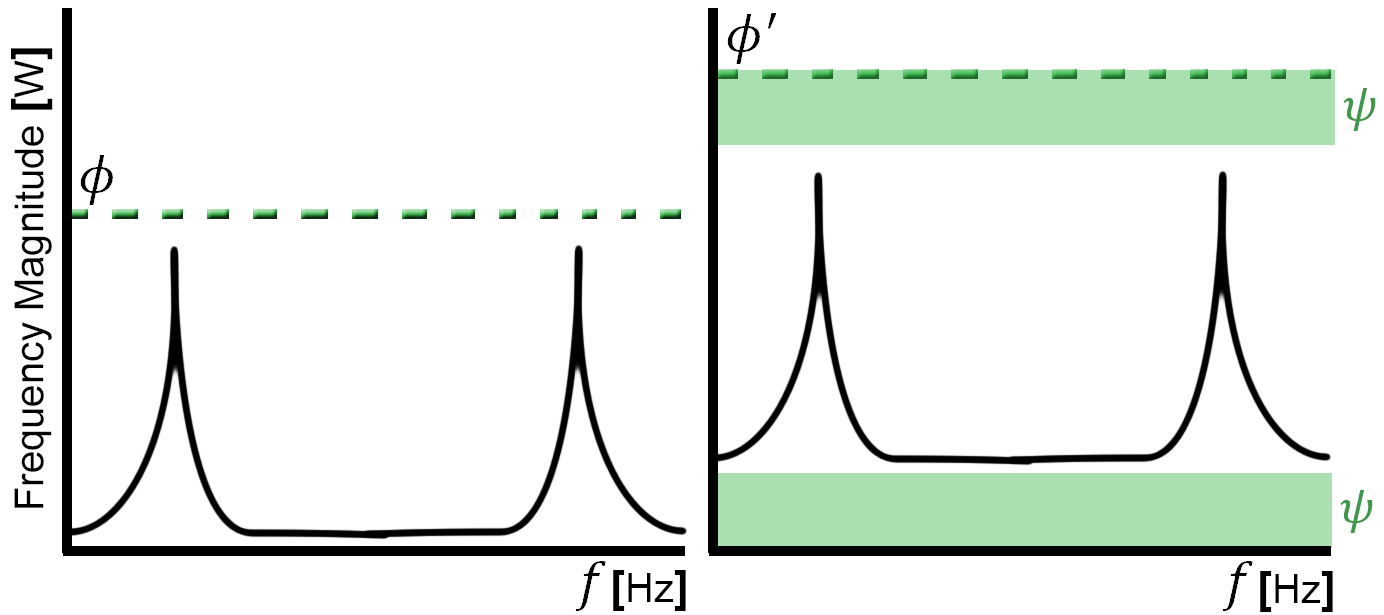}
	\caption{A sine wave signal undergoing the PR technique. The image on the
		right shows the new signal after the procedure where the shaded
		areas provide a modulo-error ``buffer zone'' with a width of $\psi$ watts
		each.}
	\label{fig:prmethod}
\end{figure}

Let $\psi$ be the width of each buffer zone in watts where
\begin{equation}
\psi \equiv u \times \sigma_0
\end{equation}
such that $u \in \mathbb{N}$ and $\sigma_0$ is the standard deviation of the
channel's noise energy.

In order to implement PR, the encrypter and decrypter must use a larger $\phi$
than previously required due to the larger range of magnitude values.
This larger version $\phi'$ can be defined as
\begin{equation}
\label{eq:phitag}
\phi' = \phi + 2 \psi
\end{equation}
where $\phi$ is determined from~\eqref{eq:phi}. Furthermore, the range from which
the magnitude keys can be selected~\eqref{eq:magkeys} must be changed to
$[0, \phi')$.

The implementation of PR is equivalent to modifying the magnitude encryption
function in~\eqref{eq:vpscencrypt} to
\begin{equation}
\label{eq:prencrypt}
e_{\vec{k}_m} \left(\vec{m}_m\right)
= \left( \vec{m}_m + \vec{k}_m + \psi \right) \bmod \phi'
\end{equation}
and magnitude decryption function in~\eqref{eq:vpscdecrypt} to
\begin{equation}
\label{eq:prdecrypt}
d_{\vec{k}_m} \left(\vec{c}_m\right)
= \left( \vec{c}_m - \vec{k}_m \right) \bmod \phi' - \psi
\end{equation}

Although PR can completely eliminate the noise distortions, its cryptogram
\eqref{eq:prdecrypt} requires a greater transmission power than the original
cryptogram~\eqref{eq:vpscencrypt} due to the power change in~\eqref{eq:phitag}.

\subsubsection{Statistical-floor}
Unlike the PR technique, statistical-floor (SF) is implemented in the decrypter alone. The idea is to try and correct those values which have erroneously been shifted over the boundaries by the noise energy. The method tries to clean the signal by correcting erroneous fallouts before and after the subtraction step in~\eqref{eq:vpscdecrypt}.

There are two cases which we consider erroneous: impossible magnitudes and unlikely fallouts.

When there is no added noise to the signal, it is impossible to receive an encrypted magnitude above a certain value. Therefore, when we receive these ``impossible magnitudes'' the only conclusion we can have is that they were affected by some positive noise. More specifically, when there is no noise, the largest frequency magnitude possible is $\phi-\varepsilon$. When there is noise, it is possible to receive a magnitude above $\phi$. Therefore, we can conclude that any received magnitudes greater than or equal to $\phi$ should be floored to $\phi-\varepsilon$ before the subtraction step~\eqref{eq:vpscdecrypt}. An illustration of this technique can be found at the top of Fig.~\ref{fig:sfmethod}.

\begin{figure}[h]
	\center
	\includegraphics[width=\textwidth]{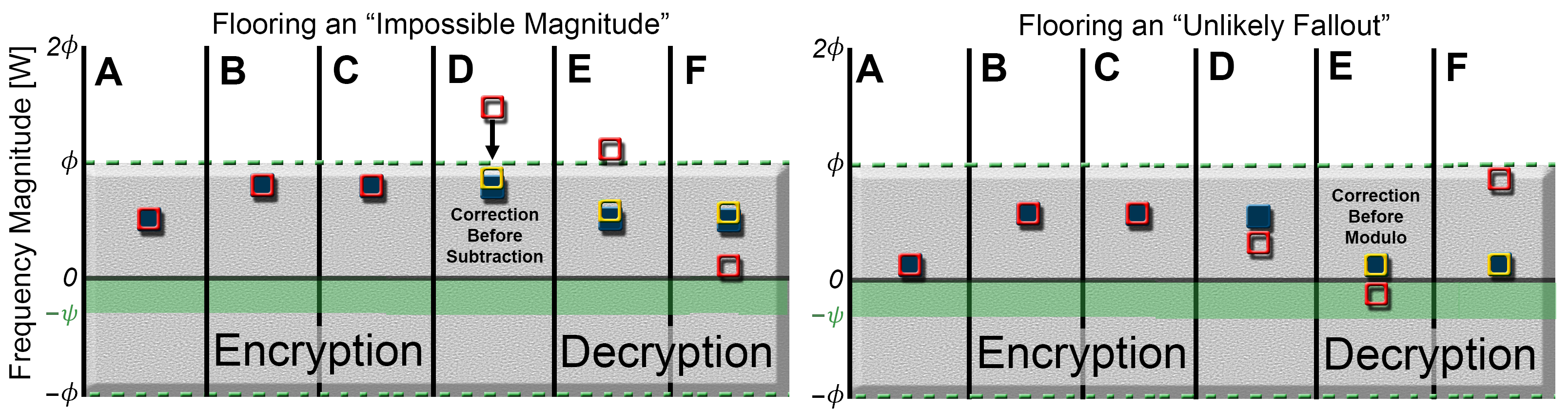}  
	\caption{A single frequency magnitude undergoing SF. Each step of the VPSC encryption/decryption is shown for when the system has no noise (filled square), noise but no SF (red/darker frame), noise and SF (yellow/brighter frame). Steps A-F are as follows: Original value, key added, modulo step, transmission over channel, key removed, modulo step.}
	\label{fig:sfmethod}
\end{figure}

This flooring procedure can be done without any prior knowledge about the original signal. However, after the subtraction step, some values end up just above or below 0. Without knowing the original signal, it is impossible to determine if the values had initially been just above 0 or had been altered as a result of noise. As discussed earlier, these values could add a large amount of noise to the signal after the modulo-step. Fortunately, if we have some statistical information about the original signal, then we can attempt to correct these values (unlikely fallouts).

For instance, let us assume that it is known that the original signal has most of its magnitude values close to 0 (like the signal in Fig.~\ref{fig:prmethod}). In this case, after the subtraction step, we will assume that all values in the range $[-\psi, 0)$ were supposed to be just above 0 but were shifted down by noise. To correct them, we will floor them to 0. The same idea can be applied if we have prior knowledge that the original signal is mostly made up of large magnitudes. An illustration of this technique can be found at the bottom of Fig.~\ref{fig:sfmethod}. This figure shows that the error is minimized by SF in both cases. Furthermore, in the case of ``impossible magnitudes'', a potentially high level of noise can be mitigated.

In Algorithm~\ref{alg:statfloor} the pseudo-code for the SF magnitude decrypter modified from~\eqref{eq:vpscdecrypt}, where $\varepsilon$ is a
very small number, and we assume that the original signal has mostly small magnitudes.

Although SF does not require more power to transmit a cryptogram (unlike PR), it sometimes incorrectly floors values that happen to be below 0 that should not have been modified. Since this error is unavoidable when using this technique, it is impossible for SF to completely eliminate the modulo noise distortions without adding some of its own.

\begin{singlespacing}
	\begin{algorithm}[ht]
		\caption{Pseudo-code for the magnitude decryption function using the SF technique, under the assumption that the original signal has mostly small frequency magnitudes.}
		\label{alg:statfloor}
		\begin{algorithmic}[1]
			\Function{statistical-floor}{$\protect\vec{m}_m$, $\protect\vec{k}_m$}
			\For{$i \gets 1$ to $N$}
			\If{$\vec{m}_m[i] \geq \phi$}\Comment{impossible magnitude}
			\State $\vec{m}_m[i] \gets \phi - \varepsilon$
			\EndIf
			\State $\vec{m}_m[i] \gets \vec{c}_m[i] -
			\vec{k}_m[i]$
			\If{$-\psi \leq \vec{m}_m[i] < 0$}\Comment{unlikely fallout}
			\State $\vec{m}_m[i] \gets 0$
			\EndIf
			\State $\vec{m}_m[i] \gets \left(
			\vec{m}_m[i] \right) \bmod \phi$
			\EndFor
			\EndFunction
		\end{algorithmic}
	\end{algorithm}
\end{singlespacing}	
\vspace{-6pt}

\subsubsection{Combined Method}
It is clear that the PR and SF techniques have their advantages and disadvantages, as mentioned above. To obtain the \textit{best of both worlds}, one may use a combination of both PR and SF techniques. Such a combination can eliminate almost all distortions while using less transmission power than PR to achieve the same noise reduction. The pseudo-code for performing encryption and decryption with the combined method can be found in Algorithms~\ref{alg:enc_combined} and~\ref{alg:dec_combined}.

\begin{singlespacing}
	\begin{algorithm}[ht]
		\caption{Pseudo-code for the VPSC's magnitude encryption, using the combined noise mitigation technique.}
		\label{alg:enc_combined}
		\begin{algorithmic}[1]
			\Function{$e_{\vec{k}_m}$}{$\protect\vec{m}_m$}
			\State $\vec{c}_m \leftarrow \vec{m}_m$
			\For{$i \gets 1$ to $N$}
			\State $\vec{c}_m[i] \leftarrow \vec{c}_m[i] + \lambda$ \Comment{add energy buffer}
			\State $\vec{c}_m[i] \leftarrow \bmod(\vec{m}_m[i] + \vec{k}_m[i], \phi + 2\lambda)$\Comment{encrypt}
			\State $\vec{c}_m[i] \leftarrow \vec{c}_m[i] + \lambda$ \Comment{add carrier energy}
			\EndFor
			\EndFunction
		\end{algorithmic}
	\end{algorithm}
\end{singlespacing}	
\begin{singlespacing}
	\begin{algorithm}[ht]
		\caption{Pseudo-code for the VPSC's magnitude decryption, using the combined noise mitigation technique.}
		\label{alg:dec_combined}
		\begin{algorithmic}[1]
			\Function{$d_{\vec{k}_m}$}{$\protect\vec{c}_m$}
			\State $\vec{m}_m \leftarrow \vec{c}_m$
			\For{$i \gets 1$ to $N$}
			\If{$\vec{c}_m[i] > \phi + 3\lambda$}\Comment{remove impossible magnitudes}
			\State $\vec{c}_m[i] \leftarrow \phi + 3\lambda$
			\EndIf
			\State $\vec{c}_m[i] \leftarrow \vec{c}_m[i] - \lambda$\Comment{remove carrier energy}
			\If{$\vec{c}_m[i] < 0$}
			\State $\vec{c}_m[i] \leftarrow 0$
			\EndIf
			\State $\vec{m}_m[i] \leftarrow \bmod(\vec{c}_m[i] -\vec{k}_m[i], \phi + 2\lambda)$\Comment{decrypt}
			\State $\vec{m}_m[i] \leftarrow \vec{m}_m[i] - \lambda$ \Comment{remove energy buffer}
			\EndFor
			\EndFunction
		\end{algorithmic}
	\end{algorithm}
\end{singlespacing}

\subsection{Key Sharing}
Since the VPSC is a specific case of the OTP, the length of the VPSC's key must equal the length of the streamed message. Sharing this key as a prior secret between two parties is impractical. In this type of situation it is common for both parties to agree upon a secret seed to initialize a key-stream generator. This makes the shared key finite as opposed to virtually infinite in length.

Another consideration is how two communicating parties with only public channels can share a secret key (or seed) before any secure channel has been established. This subject has been well researched \cite{symKeyEx2} and there are several common solutions.  One solution is to form a hybrid cryptosystem using public-key cryptography \cite{symKeyEx1}. In hybrid cryptosystems, a symmetric-key system (such as the VPSC) is initiated with an asymmetric key exchange. This exchange between VPSC transmitter / receiver modules can be controlled by a well-known protocol, and transmitted according to the physical media present.

\section{Cryptanalysis}
\label{sec:cryptanalysis}
In this Section we establish that the VPSC is unconditionally secure
(unbreakable) when using truly random keys. We shall also discuss how the
VPSC is computationally-secure based on the strength of the pseudo random
number generator (PRNG) it uses.

\subsection{Extension of the One-Time-Pad}
The VPSC is based on the Vernam Cipher \cite{vernam19:otp}. Claude Shannon
proved that if truly random numbers are used to generate this cipher's
encryption keys, it then becomes what is known as a One-Time-Pad (OTP), which is
unconditionally secure \cite{shannon49:secrecy}.

The typical OTP operates by performing bitwise XOR operations on two
binary vectors of equal length: the message and the key. The cryptographic behavior
of the XOR operation, performed on each bit in the vector, can be applied to
each element in an n-ary vector as well. This is because the XOR operation can
be viewed as a modulo-2 operation. For instance, if $a,b \in \{0,1\}$ then the
XOR operation $a \oplus b$ is equivalent to performing $(a+b)\bmod 2$.
Similarly, with an n-ary vector, if $c,d \in \{0,\hdots,n-1\}$ then the same
cryptographic behavior is observable from $(a+b)\bmod n$.

This means that the OTP cipher can be applied to any system which uses n-ary
vectors such as $\vec{m}_m, \vec{k}_m, \vec{c}_m$
and $\vec{m}_a, \vec{k}_a, \vec{c}_a$ from
our cryptographic model described in Section~\ref{sec:cryptomodel}. Therefore, the
VPSC can be viewed as an extension of the OTP, and if the selection of
$\vec{k} = (\vec{k}_m, \vec{k}_a) \in K$ is purely random then the VPSC is unconditionally secure.

\begin{lemma}
	\label{magequivocation}
	If the selection of $\vec{k}_m$ is purely random, then 
	\begin{equation}
	\label{eq:otppt1mag}
	p \left( \vec{c}_m [i] = z \middle| \vec{m}_m [i] = x \right)
	= p \left( \vec{c}_m [i] = z \right),
	\forall x \in M_m, \forall z \in C_m, \forall i
	\end{equation}
\end{lemma}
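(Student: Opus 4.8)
The plan is to reduce \eqref{eq:otppt1mag} to the classical one-time-pad argument, exploiting that the magnitude encryption $\vec{c}_m[i] = \left(\left(\vec{m}_m[i]+\vec{k}_m[i]\right)\bmod\phi\right)+\lambda$ is, for each fixed index $i$, a deterministic invertible shift composed with a modular addition of an independent uniform key value.

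First I would isolate a single component: fix $i$ and condition on $\vec{m}_m[i]=x$. Since the additive constant $\lambda$ is a fixed, invertible translation it has no probabilistic effect, so it suffices to show that $\vec{c}_m[i]-\lambda = (x+\vec{k}_m[i])\bmod\phi$ is uniformly distributed over the value set $[0,\phi)$, and that this distribution does not depend on $x$. Here I would invoke two facts set up in Section~\ref{sec:cryptomodel}: (i) for the indices carrying an independent key component, $\vec{k}_m[i]$ is drawn uniformly from $[0,\phi)$ (the \texttt{mirror}-image indices inherit the same marginal, and the leading $0$ component is the degenerate exception), and (ii) the key $\vec{k}_m$ is selected independently of the message $\vec{m}_m$. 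The map $t \mapsto (x+t)\bmod\phi$ is then a bijection of the residues modulo $\phi$ onto themselves — a rotation of $\mathbb{Z}/\phi\mathbb{Z}$ — hence measure-preserving; pushing the uniform law of $\vec{k}_m[i]$ through it returns the uniform law again, with no residual dependence on $x$. This establishes that $p\left(\vec{c}_m[i]=z\mid\vec{m}_m[i]=x\right)$ is the same constant (in $z$) for every conditioning value $x$.

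The final step is the law of total probability: writing $p\left(\vec{c}_m[i]=z\right) = \sum_{x\in M_m} p\left(\vec{c}_m[i]=z\mid\vec{m}_m[i]=x\right) p\left(\vec{m}_m[i]=x\right)$ and pulling out the common conditional value (constant in $x$ by the previous step), the remaining sum of the message marginals equals $1$, so $p\left(\vec{c}_m[i]=z\right)$ equals that same value $p\left(\vec{c}_m[i]=z\mid\vec{m}_m[i]=x\right)$, which is exactly \eqref{eq:otppt1mag}.

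I do not expect a genuine obstacle here — this is the standard Vernam/OTP computation — but rather a few bookkeeping points that must be handled cleanly: making the value set over which ``$\bmod\phi$'' acts precise enough (under the quantization assumed at the start of Section~\ref{sec:cryptomodel}) that it is literally a finite group on which the rotation is measure-preserving; verifying that $\lambda$ (and, if one wished to restate the lemma for the PR/SF variants, the buffer and carrier constants) are deterministic bijective shifts that drop out; and observing that the structural constraints on $\vec{k}_m$ (the leading $0$ and the \texttt{mirror} symmetry) only couple distinct indices, hence leave the single-index marginal intact. The one hypothesis that must be cited explicitly is the independence of key and message; everything else is routine.
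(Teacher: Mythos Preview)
Your argument is correct and is, at its core, the same one-time-pad computation the paper carries out: independence of the key from the message together with uniformity of $\vec{k}_m[i]$ forces the conditional law of $\vec{c}_m[i]$ to coincide with the unconditional one. The organizational difference is that the paper does not establish \eqref{eq:otppt1mag} in the direction stated; instead it passes via Bayes' formula to the equivalent assertion $p(\vec{m}_m[i]=x\mid\vec{c}_m[i]=z)=p(\vec{m}_m[i]=x)$, computing the joint $p(\vec{m}_m[i]=x,\ \vec{c}_m[i]=z)=p(\vec{m}_m[i]=x)\cdot\frac{1}{L}$ and the marginal $p(\vec{c}_m[i]=z)=\frac{1}{L}$ explicitly and taking the ratio. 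Your route---observing that $t\mapsto(x+t)\bmod\phi$ is a measure-preserving bijection so the conditional law is independent of $x$, then invoking total probability---reaches the stated equation without the Bayes detour, and you additionally account for the deterministic shift $\lambda$ and the structural constraints on $\vec{k}_m$ (leading zero, mirror symmetry) that the paper's proof leaves implicit.
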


\begin{proof}
	Starting from the left-hand part in~\eqref{eq:otppt1mag}, by Bayes' formula we get
	\begin{equation}
	\label{eq:otppt2}
	p \left( \vec{m}_m [i] = x \middle| \vec{c}_m [i] = z \right)
	= \frac{p \left( \vec{m}_m [i] = x \land \vec{c}_m [i] = z \right)}
	{p \left( \vec{c}_m [i] = z \right)},
	\forall x \in M_m, \forall z \in C_m
	\end{equation}
	The encryption function in the numerator in~\eqref{eq:otppt2} can be expanded to produce the following equivalence
	\begin{subequations}
		\begin{equation}
		\label{eq:numeratorpt1}
		p \left( \vec{m}_m [i] = x \land \vec{c}_m [i] = z \right)
		=  p \left( \vec{m}_m [i] = x \land \vec{k}_m [i]
		= (z-x) \bmod \phi \right)
		\end{equation}
		Since the \emph{a priori} probability of the key selection is independent from
		that of the plaintext magnitudes, the right-hand part of~\eqref{eq:numeratorpt1} can be expressed as
		\begin{equation}
		\label{eq:numeratorpt2}
		p \left( \vec{m}_m [i] = x \right) \times \left(
		\vec{k}_m [i] = (z-x) \bmod \phi \right)
		\end{equation}
		and since the selection of $k_m [i]$ is chosen uniformly on the range
		$[0,\phi]$,~\eqref{eq:numeratorpt2} can be reduced to
		\begin{equation}
		p \left( \vec{m}_m [i] = x \right) \times \frac{1}{L}
		\end{equation}
		where $L$ is the discrete number of levels in the range $[0,\phi]$.
	\end{subequations}
	
	The following equivalence can be applied to the denominator in
	\eqref{eq:otppt2}
	\begin{equation}
	\label{eq:numeratorpt3}
	p \left( \vec{c}_m [i] = z \right) = \sum_X p \left( \vec{m}_m [i] = x \land \vec{c}_m [i] = z \right) = \sum_X p \left( \vec{m}_m [i] = x \right) \times \frac{1}{L} \nonumber = \frac{1}{L} \nonumber
	\end{equation}
	
	In other words, we can deduce from the denominator in~\eqref{eq:otppt2} that
	each cryptogram magnitude $z$ is equally likely to occur. Therefore, from Bayes' theorem
	in~\eqref{eq:otppt1mag} it can be shown that
	\begin{equation}
	\label{eq:magresult}
	p \left( \vec{m}_m [i] = x \middle| \vec{c}_m [i] = z \right)
	= \frac{p \left( \vec{m}_m [i] = x \right) \times \frac{1}{L}}
	{\frac{1}{L}}
	= p \left( \vec{m}_m [i] = x \right),
	\forall x \in M_m, \forall z \in C_m, \forall i 
	\end{equation}
\end{proof}

\begin{lemma}
	\label{angequivocation}
	If the selection of $\vec{k}_a$ is purely random, then 
	\begin{equation}
	\label{eq:otppt1angle1}
	p \left( \vec{m}_a [i] = x \middle| \vec{c}_a [i] = z \right)
	= p \left( \vec{m}_a [i] = x \right),
	\forall x \in M_a, \forall z \in C_a, \forall i 
	\end{equation}
\end{lemma}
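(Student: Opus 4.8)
The plan is to carry over the argument of Lemma~\ref{magequivocation} essentially verbatim, with the magnitude modulo $\phi$ replaced by the (implicit) circular arithmetic modulo $2\pi$ that governs phases. The angle encryption in~\eqref{eq:vpscencrypt} is $\vec{c}_a = \vec{m}_a + \vec{k}_a$, which is to be read modulo $2\pi$ so that the result stays on $[-\pi,\pi)$; since $\vec{k}_a$ is built from a vector $\vec{a}$ whose entries are uniform on $[-\pi,\pi)$, each component $\vec{k}_a[i]$ is marginally uniform on that interval --- negating a uniform variable yields a uniform variable, so the mirrored half of~\eqref{eq:magkeys} inherits the same marginal. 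Let $L_a$ denote the number of discrete levels available to a phase.

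First I would apply Bayes' formula to the left-hand side of~\eqref{eq:otppt1angle1},
\[
p \left( \vec{m}_a [i] = x \,\middle|\, \vec{c}_a [i] = z \right)
= \frac{p \left( \vec{m}_a [i] = x \land \vec{c}_a [i] = z \right)}{p \left( \vec{c}_a [i] = z \right)} .
\]
I would then rewrite the numerator via the encryption rule: the event $\{\vec{m}_a[i]=x \land \vec{c}_a[i]=z\}$ is identical to $\{\vec{m}_a[i]=x \land \vec{k}_a[i]=(z-x)\bmod 2\pi\}$. Because the key is chosen \emph{a priori} independently of the plaintext phases, this factors as $p(\vec{m}_a[i]=x)\cdot p(\vec{k}_a[i]=(z-x)\bmod 2\pi)$, and the uniformity of $\vec{k}_a[i]$ reduces the second factor to $1/L_a$, independent of $x$ and $z$. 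Marginalizing gives the denominator, $p(\vec{c}_a[i]=z) = \sum_x p(\vec{m}_a[i]=x)\cdot (1/L_a) = 1/L_a$, i.e. every ciphertext phase is equally likely; substituting numerator and denominator back into Bayes' formula cancels the $1/L_a$ factors and yields $p(\vec{m}_a[i]=x \mid \vec{c}_a[i]=z) = p(\vec{m}_a[i]=x)$, which is exactly~\eqref{eq:otppt1angle1}.

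The step I expect to be the main obstacle is not any calculation but the bookkeeping around the circular arithmetic: unlike the magnitude branch of~\eqref{eq:vpscencrypt}, no explicit $\bmod$ appears for phases, so one must argue that the wrap-around modulo $2\pi$ is legitimate because phases are only defined up to multiples of $2\pi$ (equivalently, addition of any fixed phase is a bijection of the circle, which is what perfect secrecy needs). A secondary subtlety is that the structure $\vec{k}_a = \mathrm{concatenate}(\vec{a}, -\texttt{refl}(\vec{a}))$ makes the components pairwise \emph{dependent}; but~\eqref{eq:otppt1angle1} is a single-index statement, and each individual component is still marginally uniform, so the argument goes through index by index. Note finally that, in contrast to Lemma~\ref{magequivocation}, the claim is already stated in the posterior-equals-prior form, so no closing Bayes ``inversion'' is required.
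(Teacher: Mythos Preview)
Your proposal is correct and follows essentially the same route as the paper: Bayes' formula, rewriting the joint event through the encryption rule $\vec{k}_a[i]=(z-x)$, factoring via independence of key and plaintext, reducing the key factor to $1/L$ by uniformity on $[-\pi,\pi)$, and cancelling against the denominator. If anything you are more explicit than the paper, which omits the $\bmod 2\pi$ bookkeeping and the remark about the mirrored dependence in $\vec{k}_a$; the paper simply writes $\vec{k}_a[i]=(z-x)$ and defers the denominator computation back to the magnitude lemma.
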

\begin{proof}
	Following~\eqref{eq:otppt2}, similarly to~\eqref{eq:numeratorpt1},~\eqref{eq:otppt1angle1} can be presented as
	\begin{equation}
	\label{eq:numeratorpt1angle}
	p \left( \vec{m}_a [i] = x \land \vec{c}_a [i] = z \right)
	=  p \left( \vec{m}_a [i] = x \land \vec{k}_a [i] = (z-x) \right)
	\end{equation}
	
	Since the selection of $k_a [i]$ is chosen uniformly on the range
	$[-\pi,\pi)$, and following ~\eqref{eq:numeratorpt2},~\eqref{eq:numeratorpt1angle} can be reduced to
	\begin{equation}
	p \left( \vec{m}_a [i] = x \right) \times \frac{1}{L}
	\end{equation}
	where $L$ is the discrete number of levels in the range $[-\pi,\pi)$.
	
	Therefore, by (\ref{eq:magresult}), applying Bayes' theorem as in (\ref{eq:otppt1angle1}) results in
	\begin{equation}
	p \left( \vec{m}_a [i] = x \middle| \vec{c}_a [i] = z \right)
	= \frac{p \left( \vec{m}_a [i] = x \right) \times \frac{1}{L}}
	{\frac{1}{L}}
	= p \left( \vec{m}_a [i] = x \right),
	\forall x \in M_a, \forall z \in C_a, \forall i \nonumber
	\end{equation}
\end{proof}

\begin{theorem}
	If the selection of $\vec{k} = (\vec{k}_m, \vec{k}_a)$ is purely random, then the VPSC is unconditionally secure (unbreakable).
\end{theorem}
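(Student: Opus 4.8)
The plan is to bootstrap the two single-component results, Lemma~\ref{magequivocation} and Lemma~\ref{angequivocation}, into a statement about the entire message vector, and then to recognize that statement as Shannon's definition of perfect secrecy. Recall that a cipher is unconditionally secure exactly when the a~posteriori distribution of the plaintext coincides with its a~priori distribution, $p(\vec{m}\mid\vec{c}) = p(\vec{m})$ for every $\vec{m}\in M$ and every $\vec{c}\in C$ --- equivalently, when $\vec{m}$ and $\vec{c}$ are statistically independent, so that interception conveys zero mutual information. The support condition behind this notion is already in place: in Section~\ref{sec:cryptomodel} the cryptogram space $C$ was defined to equal the set of real signals realizable from $M$, so every cryptogram is reachable from every message under a suitable key.

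First I would pass from the vector claim to its independent coordinates. Since we encrypt DFTs of real signals, both the message pair $(\vec{m}_m,\vec{m}_a)$ and the key pair $(\vec{k}_m,\vec{k}_a)$ carry the conjugate-symmetry redundancy built into~\eqref{eq:magkeys} and its angle counterpart: the roughly $N/2$ ``lower'' bins are independent free coordinates, the ``upper'' bins are deterministic mirror/negation copies of them, and the DC and Nyquist bins (both real and phaseless) are handled bin-by-bin on their own. On each free coordinate the corresponding key entry is, by construction, an independent draw --- uniform over the quantized levels of $[0,\phi)$ (or of $[0,\phi')$ under preemptive-rise) for magnitudes, and uniform over $[-\pi,\pi)$ for angles --- and it is independent of the plaintext. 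I would state explicitly that the mirrored bins add no information, being fixed functions of the free bins on both the plaintext and the ciphertext side, so it suffices to argue over the free coordinates.

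Next I would chain the lemmas across those coordinates. Let $I$ index the free bins. Using the mutual independence of the key entries together with their independence from the plaintext, the per-coordinate rules ($\vec{c}_m[i] = (\vec{m}_m[i]+\vec{k}_m[i])\bmod\phi$ and $\vec{c}_a[i] = \vec{m}_a[i]+\vec{k}_a[i]$, in the notation of~\eqref{eq:vpscencrypt}) make the likelihood factorize,
\begin{equation}
p\!\left(\vec{c}\mid\vec{m}\right) = \prod_{i\in I} p\!\left(\vec{c}_m[i]\mid\vec{m}_m[i]\right)\,p\!\left(\vec{c}_a[i]\mid\vec{m}_a[i]\right).
\end{equation}
Lemma~\ref{magequivocation} replaces each magnitude factor by $p(\vec{c}_m[i])$ and Lemma~\ref{angequivocation} replaces each angle factor by $p(\vec{c}_a[i])$, whence $p(\vec{c}\mid\vec{m}) = \prod_{i\in I} p(\vec{c}_m[i])\,p(\vec{c}_a[i]) = p(\vec{c})$, with no dependence on $\vec{m}$. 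Applying Bayes' rule exactly as in~\eqref{eq:magresult} then gives $p(\vec{m}\mid\vec{c}) = p(\vec{m})$ for all $\vec{m}$ and $\vec{c}$, which is the definition of unconditional security. Finally I would note that the fixed amplification $\lambda$ carrying $\vec{c}_m$ to the transmitted $\vec{c}_m+\lambda$, the buffer shifts of the preemptive-rise and combined variants, and the $DFT^{-1}$ mapping $\vec{c}$ to the over-the-air signal $\vec{s}'$ are all publicly known, key-independent bijections; post-composing a perfectly secret channel with such a map leaves the equivocation unchanged, so the guarantee extends to every quantity Eve actually observes.

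The step I expect to cost the most care is the conjugate-symmetry bookkeeping: I must verify that the free key coordinates really are mutually independent and each marginally uniform over its quantized range, and that no correlation sneaks in through the mirrored half or through the shared DC/Nyquist bins --- otherwise the clean product form above collapses and only the weaker, coordinate-wise statement of the lemmas survives. Everything else (Bayes, the uniform-key cancellation, the $\lambda$ translation) is routine once that independence is pinned down.
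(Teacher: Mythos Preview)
Your proposal is correct and follows essentially the same route as the paper: factor the cipher coordinate-by-coordinate into magnitude and angle components, apply Lemmas~\ref{magequivocation} and~\ref{angequivocation} to each factor, and conclude perfect secrecy via Bayes. You are in fact more careful than the paper on two points it leaves implicit---the conjugate-symmetry bookkeeping that reduces the argument to the free DFT bins, and the observation that the public, key-independent bijections ($\lambda$-shift, $DFT^{-1}$) preserve equivocation---but the core argument is the same.
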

\begin{proof}
	According to Claude Shannon's work in
	\cite{shannon49:secrecy}, given a plaintext message $m_1$ and cryptogram $c_1$,
	\begin{equation}
	\label{eq:otpshannon49}
	p \left( C = c_1 \middle| M = m_1 \right)
	= p \left( C = c_1 \right),
	\end{equation}
	
	Therefore, the VPSC is unconditionally secure if
	\begin{equation}
	\label{eq:otppt1}
	p \left( \vec{c} [i] = z \middle| \vec{m} [i] = x \right)
	= p \left( \vec{c} [i] = z \right),
	\forall x \in M, \forall z \in C, \forall i
	\end{equation}
	In other words, the VPSC's encrypted channel must provide
	no equivocation. No amount of cryptograms $\vec{c}_m [i]$ may provide
	any information about the original plaintext $\vec{m} [i]$.
	By Bayes' theorem,~\eqref{eq:otppt1} is equivalent to
	\begin{equation}
	\label{eq:otppt1eq}
	p \left( \vec{m} [i] = x \middle| \vec{c} [i] = z \right)
	= p \left( \vec{m} [i] = x \right),
	\forall x \in M, \forall z \in C, \forall i
	\end{equation}
	
	By expanding the magnitude and angle components of $x \in M, z \in C$,~\eqref{eq:otppt1eq} is equivalent to
	\begin{subequations}
		\begin{align}
		\label{eq:optangmag1}
		p \left( \vec{m} [i] = (x_m, x_a) \middle| \vec{c} [i] = (z_m, z_a) \right)
		= p \left( \vec{m} [i] = (x_m, x_a) \right),
		\forall x_m, z_m \in M, \forall x_a, z_a \in C, \forall i
		\end{align}
		
		In other words, the left-hand part of~\eqref{eq:optangmag1} can be expressed as
		\begin{align}
		\label{eq:optangmag2}
		p \left( \vec{m}_m [i] = x_m \land \vec{m}_a [i] = x_a \middle| \vec{c}_m [i] = z_m \land \vec{c}_a [i] = z_a \right)
		\end{align}
		
		For truly random $\vec{k}_m, \vec{k}_a$, the angle and magnitude components of both message and ciphertext are independent, therefore~\eqref{eq:optangmag2} is equivalent to 
		\begin{align}
		\label{eq:optangmag3}
		p \left( \vec{m}_m [i] = x_m \middle| \vec{c}_m [i] = z_m) \times p(\vec{m}_a [i] = x_a \middle| \vec{c}_a [i] = z_a \right)
		\end{align}
	\end{subequations}
	
	By lemmas~\ref{magequivocation} and~\ref{angequivocation}, and using~\eqref{eq:optangmag3},~\eqref{eq:optangmag1} reduces to
	\begin{equation}
	p \left( \vec{m} [i] = (x_m, x_a) \right)
	= p \left( \vec{m} [i] = (x_m, x_a) \right),
	\forall x_m, z_m \in M, \forall x_a, z_a \in C, \forall i
	\end{equation}
	
	Which is trivially true, therefore we have proven the equivalence in~\eqref{eq:otppt1}. This means that the encryption of the channel provides no equivocation, by Shannon's theorem of Theoretical Secrecy~\cite{shannon49:secrecy}. 
	
\end{proof}

Since the VPSC is unconditionally secure given truly random keys, no amount of cryptograms $\vec{c}[i]$ may provide
any information about the original plaintext $\vec{m}[i]$ for any
$i$. This gives an
encrypted signal $\vec{s}'$ the appearance of random noise (completely
random samples with zero correlation).

\subsection{The Influence of the Parameters on the Secrecy}

It is also important to note that public knowledge of $\phi$ or $\phi'$ does not
compromise the system; rather it is the secrecy of the key which protects the
message.

\begin{theorem}
	For any plaintext value $m$ of magnitude $m_m$, and for all $\phi = \varphi$ s.t. $\varphi_i \geq m_m$, if $m_m$ is encrypted using a purely random key $k_m$, the encrypted value $c$ provides no equivocation over $m$.
\end{theorem}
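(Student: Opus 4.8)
The plan is to reuse the structure of the proof of Lemma~\ref{magequivocation}, but to carry the dependence on the chosen bound $\varphi$ explicitly through the argument and then observe that the final conclusion is free of $\varphi$. First I would fix an arbitrary admissible value $\phi = \varphi$, i.e. one satisfying $\varphi \geq m_m$ for the magnitude in question (more precisely $\varphi \geq \max_i \vec{m}_m[i]$ over all messages, as in~\eqref{eq:phi}). For this $\varphi$ the key component $\vec{k}_m[i]$ is, by construction~\eqref{eq:magkeys}, drawn uniformly from the $L_\varphi$ discrete quantization levels in $[0,\varphi)$, and the encrypted magnitude is $\vec{c}_m[i] = (\vec{m}_m[i] + \vec{k}_m[i]) \bmod \varphi$ (the amplification $\lambda$ of~\eqref{eq:vpscencrypt} being a fixed, publicly known additive constant that I strip off first, since a deterministic invertible shift cannot change any conditional probability).

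Next I would apply Bayes' theorem exactly as in~\eqref{eq:otppt2}. For the numerator, the event $\{\vec{m}_m[i]=x \land \vec{c}_m[i]=z\}$ is, given the plaintext, equivalent to the event $\{\vec{m}_m[i]=x \land \vec{k}_m[i] = (z-x)\bmod\varphi\}$; since key selection is independent of the plaintext and uniform, this equals $p(\vec{m}_m[i]=x)\cdot \tfrac{1}{L_\varphi}$. The crucial point is that the forced key value $(z-x)\bmod\varphi$ always lies in $[0,\varphi)$ and is therefore a legitimate key value with probability exactly $1/L_\varphi$, no matter what $x$ is --- this is precisely where the hypothesis $\varphi \geq m_m$ is used, as it guarantees that every plaintext magnitude is representable and that the modulo wraps cleanly. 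Summing over $x$ gives the denominator $p(\vec{c}_m[i]=z) = 1/L_\varphi$, so that $p(\vec{m}_m[i]=x \mid \vec{c}_m[i]=z) = p(\vec{m}_m[i]=x)$.

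Finally I would note that the right-hand side of this identity does not involve $\varphi$ at all, and that the derivation went through for every admissible $\varphi$; hence the equivocation of $M_m$ given $C_m$ is unchanged whatever value of $\phi$ (or $\phi'$, after the preemptive-rise enlargement~\eqref{eq:phitag}) the designer picks and publishes. In particular an eavesdropper who knows $\varphi$ learns only the size $L_\varphi$ of the ciphertext alphabet, which is already implied by the appearance of the encrypted signal, and gains no information about $\vec{m}_m[i]$. I expect the only real subtlety to be bookkeeping: the ciphertext space $C_m$ itself depends on $\varphi$, so ``provides no equivocation over $m$'' must be read relative to the ciphertext alphabet induced by the chosen $\varphi$; once that is made precise, the argument is a direct parametric rerun of Lemma~\ref{magequivocation} and nothing deeper is required.
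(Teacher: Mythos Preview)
Your proposal is correct and follows essentially the same route as the paper: both arguments reduce the claim to the observation that the computation of Lemma~\ref{magequivocation} (equations~\eqref{eq:numeratorpt1}--\eqref{eq:magresult}) goes through unchanged for any admissible modulus $\varphi$, so that the posterior $p(\vec{m}_m[i]=x\mid\vec{c}_m[i]=z)$ collapses to the prior regardless of which $\varphi$ the system uses. The only cosmetic difference is that the paper packages this as a short proof by contradiction (assume some $\varphi_i$ breaks independence, then note this contradicts Lemma~\ref{magequivocation}), whereas you rerun the lemma directly with $\varphi$ carried as an explicit parameter and watch it cancel; your version is slightly more transparent about \emph{where} the hypothesis $\varphi\ge m_m$ is actually used, but the mathematical content is identical.
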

\begin{proof}
	We will prove this theorem by contradiction. Let us assume that there is some $\phi = \varphi_i$ s.t.
	\begin{equation}
	\label{eq:phiabs1}
	p \left( m = (x_m, x_a) \middle| c = (z_m, z_a) \right)
	\neq p \left( m = (x_m, x_a) \right)
	\end{equation}
	
	Following~\eqref{eq:optangmag3},~\eqref{eq:phiabs1} is equivalent to 
	\begin{equation}
	\label{eq:phiabs2}
	p \left( m_m = x_m \middle| c_m = z_m \right)
	\neq p \left( m_m = x_m \right)
	\end{equation}
	
	By expansion of the encryption operation, and since the \emph{a priori} probability of the key selection is independent from that of the plaintext
	\begin{equation}
	\label{eq:phiexpanded}
	p \left( m_m = x_m \land c_m = z_m \right)
	=  p \left( m_m = x \land k_m = (z-x) \bmod \varphi \right)
	\end{equation}
	
	However, that is in violation of theorem~\ref{magequivocation} which we have proven by~\eqref{eq:numeratorpt1} to be valid for all values of $\phi$ within the domain defined in the encryption function.
	
\end{proof}

Therefore, knowledge of the setting of $\phi$ may not compromise the system. This conforms with Kerckhoffs's principle since $\phi$ is part the cryptosystem and does not belong to the secret key.

\subsection{Pseudo Random Number Generator}
As proven, the VPSC is unconditionally secure. However,
this is only the case when using truly random keys.
Therefore, it is clear that the implementation of VPSC is only as
secure as the PRNG it uses for key generation.
This is true in general for all symmetric stream ciphers.
Since their encryption and decryption functions are simple, their strength
rests in the random number generator itself
\cite{marton10:randomness}.

There are many PRNGs which are considered cryptographically secure (CSPRNG) \cite{menezes1996handbook}.
These CSPRNGs are based on hash functions, block ciphers
\cite{paar10:crypto,petit08:cipher,bellare97:symmetric} and sometimes very
difficult mathematical problems such as elliptic curve generators
\cite{barker12:rng} or integer factorization like the Blum-Blum-Shub number generator
\cite{blum86:prng}. The usage of any of these types of CSPRNGs with the VPSC
is necessary in order to ensure that no signal correlations in $M$ will show through the
key-stream taken from $K$.

\section{Signal Synchronization}
\label{sec:sigsync}

In this Section, we propose a direct analytical method of synchronizing a VPSC decrypter to an encrypter at any arbitrary point in its key sequence.
A symmetric cryptographic system, such as the Vernam cipher, synchronizes in the following way. From the start of the transmission at time $t_0$, a deterministic key-stream is constantly generated at the source in order to encrypt the live signal. In parallel, the destination generates the same key-stream and uses it to decrypt the received signal at time $t_0+\rho$, where $\rho$ is the propagation delay.

In many real world applications, communication channels (such as wireless channels) broadcast signals continuously. In such instances, it is not practical to require the authorized decrypter to be ready to accept a transmission from start time  $t_0$. For example, in some communication networks, it is desirable to
allow authentic decrypters to be introduced into the network at an arbitrary
point in time $t_\theta$ where $t_\theta > t_0$.

These added decrypters, having no prior knowledge of the encrypter's status, cannot synchronize their key-stream under practical time constraints. This is especially true when the key generator has an exceptionally long cycle, as is the case with reliable CSPRNGs.
In order to generate a key-stream from an arbitrary point in a CSPRNG cycle, under practical time constraints, the generation of the targeted n\textsuperscript{th} key must be instantaneous.

For this reason, the usage of feedback ciphers is not practical since the generation of the n\textsuperscript{th} term requires generating all previous $n-1$ terms as well. In this Section, we will assume that the VPSC has been implemented with a cipher in counter (CTR) mode, such as AES-CTR \cite{dworkin01:ciphermodes}. This is usually done in the manner shown in \cite{hudde09:ciphers}. The counter mode, while being secure (if implemented correctly) and fully parallelizable (which is a performance advantage) \cite{ferguson10:engineering}, also offers the ability to access any value in the key stream independently from previous ones.

\subsection{VPSC Direct-Synchronization}

In order to initialize their cryptographic systems, both the encrypter and decrypter must use the same secret configuration. Let $D$ be the collection of all possible initial configurations for a system such that
\begin{equation}
\label{eq:config}
D = { (sc, st, g) | sc, st, g \in N}
\end{equation}
where $sc$ is the CSPRNG's initial counter (seed counter), $st$ is the start
time of the encryption relative to the encrypter's world-clock $t_{tx}$, and
$g$ is the key generation rate (values per time unit).
This initial configuration can be used by a decrypter to calculate the current CSPRNG-counter ($cc$) which can be used to generate the current key-frame $\vec{k}$.

However, in order to properly synchronize using time references, the decrypter must know at what time the transmitter encrypted each received frame (or what the system time was). In order to do this, the decrypter must take into account the propagation delay $\rho$ and the drift between its world-clock ($t_{rx}$) and the encrypter's ($t_{tx}$). This time delay $\varepsilon$ can be described as
\begin{equation}
\varepsilon = \left( t_{rx} - t_{tx} \right) + \rho
\end{equation}

Once $\varepsilon$ is known, the decrypter can calculate the current CSPRNG-counter value as follows
\begin{equation}
\label{eq:cc}
cc = \left[ \left( \left( t_{rx}+\varepsilon \right) - st \right) \times g \right] \bmod P
\end{equation}
where $P$ is the counter's period for the CSPRNG.
A special case arises when the selected CSPRNG generates, using a single counter, only a fraction of the random values needed to create a key-frame $\vec{k}$. In that case, we must round $cc$ to the nearest counter value which begins $\vec{k}$. This initial counter ($Ic$) can be calculated by
\begin{equation}
\label{eq:ic}
Ic = cc - cc \bmod (u)
\end{equation}
where $u$ is the number of counters needed to make a single key-frame.

\subsection{Time Delay Inference}
\label{sec:timedelayinfer}

Equations~\eqref{eq:cc} and~\eqref{eq:ic} and the encrypter's configuration $d \in D$, enable synchronization at any arbitrary point in time. However, in reality, a decrypter does not know $\varepsilon$ offhand. This is a problem since decryption of an encrypted stream at the wrong moment in time will result in a useless random signal (white noise).

Our proposed solution to this problem is to have the decrypter seek out $\varepsilon$ from the received signal. This can be achieved by first using~\eqref{eq:cc} without $\varepsilon$ to find a nearby counter. Afterwards, by finding the spot where the autocorrelation of the decrypted signal is the least like white noise, we can calculate $\varepsilon$. This works because each $\vec{k}$ will only decrypt its associated frame. Therefore, when a frame is found at a temporal distance from where we expected it, we can be sure that this offset is the $\varepsilon$ we are seeking. Moreover, if we average the results from this method several times with different keys, we can get better accuracy in the presence of noise, as shown in Fig.~\ref{fig:avgautocorrelations}.

Below are the steps for calculating $\varepsilon$, followed by a detailed explanation.

\begin{singlespace}
	\begin{enumerate}
		\item[] \textbf{Initialization}
		\item \label{itm:receive} Receive several frames-worth of encrypted samples into a buffer.
		\item \label{itm:keys} Using (\ref{eq:cc},~\ref{eq:ic}) without $\varepsilon$, calculate a number of different keys ($\vec{k}$) that span the middle of the buffer.
		\item \label{itm:window} Place a window of length $N$ at the beginning of the buffer.
		
		\item[] \textbf{Search for signal correlations}
		\item \label{itm:attemptdecrypt} Decrypt the window using each of the keys calculated in step~\ref{itm:keys}.
		\item Find the autocorrelations of the resulting frames.
		\item Calculate white noise comparison metrics for each of the autocorrelations.
		\item \label{itm:save} Save the metrics into different arrays (one for each key, respectively).
		\item \label{itm:shift} If the end of the buffer has not been reached, then shift the window by 1, sample and go to step~\ref{itm:attemptdecrypt}.
		
		\item[] \textbf{Calculate the time difference}
		\item \label{itm:avg} Superimpose and average the peaks found in each of the metric arrays (populated by step~\ref{itm:save}).
		\item \label{itm:getpeak} Calculate $\varepsilon$ by comparing the $t_{rx}$ used in step~\ref{itm:keys} to the highest peak value from step~\ref{itm:avg} on the temporal plane.
	\end{enumerate}
\end{singlespace}
\begin{figure}[h]
	\centering
	\includegraphics[width=\textwidth]{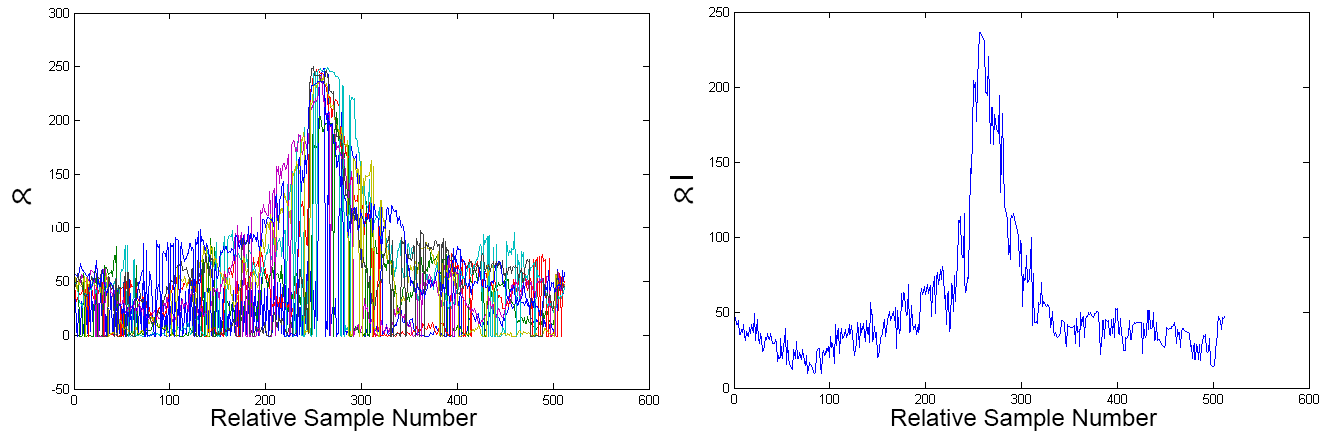} 
	\caption{Finding the start of a frame via signal correlations from decrypted frames. The frames were captured from a real dataset generated by a VPSC hardware prototype (see Section\ref{sec:simu}). The peaks indicate the window shift where the frame is the least like white noise (i.e. the actual start of a frame). Top: 8 metric arrays superimposed. Bottom: all 8 arrays averaged to mitigate noise.}%
	\label{fig:avgautocorrelations}
\end{figure}

In the initialization (steps~\ref{itm:receive}-\ref{itm:window}), a time span and all its details are captured in order to perform an offline analysis. At this point, we generate the keys for the frames that should fallout in the middle of the buffer by using~\eqref{eq:cc} without $\varepsilon$. Taking into account that $\varepsilon$ could be relatively large, the buffer should be significantly longer than the sum of the frames we are looking for.

Next is the search for signal correlations (steps~\ref{itm:attemptdecrypt}-\ref{itm:shift}). When a decrypted frame of samples has an autocorrelation that is not similar to that of white noise, then that is a good indication that there is some information gain. Of course, we assume that a communication channel has by definition some amount of information gain and therefore has low entropy \cite{gray11:entropy}. In order to measure the similarity of a decrypted signal to white noise we use the metric
\begin{equation}
\alpha =
\frac{\sum_{k=1}^{N-1} R_{\vec{d}, \vec{d}} [k]}
{R_{\vec{d}, \vec{d}} [0]}
\end{equation}
where $R_{\vec{x},\vec{x}} [n]$ is the discrete autocorrelation of the vector $\vec{x}$ at index $n$, and $\vec{d}$ is the decrypted frame. It can be observed that when $\vec{d}$ is in the form of white noise, the scalar $\alpha$ will yield a small value, since $\vec{d}$'s autocorrelation will be similar to that of the Dirac delta function. Similarly, as $\vec{d}$ becomes less similar to white noise, $\alpha$ will yield an increasingly larger value (indicating a dissimilarity to white noise).

Lastly, in steps~\ref{itm:avg} and~\ref{itm:getpeak}, we calculate the time difference. In each of the metric arrays there is a rise and fall in values. After superimposing and averaging the frames, the peak indicates where the signal holds the most information (i.e. the shift that covers the frame precisely). This is illustrated in Fig.\ref{fig:avgautocorrelations}. Since we have temporally found the frame for the respective key, we can now calculate $\varepsilon$. This can be done by solving
\begin{equation}
cc_{\vec{k}} = \left[ \left( \left( t_{usedRx_{\vec{k}}} - t_{peak_{\vec{k}}} \right) - st \right) \times g \right] \bmod P
\end{equation}
where $cc_{\vec{k}}$ and $t_{usedRx_{\vec{k}}}$ are the counter and ``current time'' used when creating $\vec{k}$ in step 2, and $t_{peak_{\vec{k}}}$ is the time sample which was derived from the head of the window (from that autocorrelation) for that shift.

It is important to note that the proposed method of synchronization is only practical if the decrypter's initial $t_{rx}$ has a time delay $\varepsilon$ less than some reasonable amount.

\section{Complexity \& Performance} \label{sec:performance}
Although it can be used in an offline manner, the VPSC's most useful application is in the real-time domain. Real-time applications are very sensitive to time delays. Therefore, it is important to discuss the VPSC's performance, in particular its processing delay.

Since the number of operations used at each end of the VPSC is relatively low,
analyzing its performance is relatively simple. As mentioned in
Section~\ref{sec:thevpsc}, the operations that take place at each end are:
$DFT$, $DFT^{-1}$, modulo-add or modulo-subtract and key generation.

Firstly, counter-mode ciphers are fully parallelizable \cite{ferguson10:engineering}. Therefore, in this Section, we will assume that
the CSPRNG calculations for generating the keys have been parallelized with the
encryption process, and are therefore not a factor in calculating the delay. Therefore, the
processing delay is dependent on all other operations mentioned above.

The most computationally expensive operations are the $DFT$ and $DFT^{-1}$ operations which are performed at each time a frame is encrypted or decrypted. The complexity of performing a $DFT$ or $DFT^{-1}$ on $N$ samples is $O(N log(N))$ \cite{lohne2017computational}. Since the VPSC operates on real-valued signals, a trick can be performed to reduce complexity. The trick is to put $N/2$ samples into each both real and complex inputs of a $\frac{N}{2}$ $DFT$ \cite{jones2010fast}. Therefore, the complexity of the encryption function $e_{\vec{k}}$ and decryption function $d_{\vec{k}}$ is
\begin{equation}
O\left(2 \cdot \frac{N}{2} \text{log}\left(\frac{N}{2}\right)\right) = O\left(N \text{log}\left(\frac{N}{2}\right)\right) 
\end{equation}

Although the $DFT$ and $DFT^{-1}$ are computationally costly operations, today it is
possible to find inexpensive hardware-accelerated DSP chips capable of
performing them at high speeds.
For example, the processor shown in \cite{analog13:ffttime} calculates a 1024-point
complex FFT at a 32-bit precision in $23.2 \mu s$. Using optimizations for real signals, it is possible to achieve even lower processing times. The same processor can also perform the
division required for the modulo operations in $8.75 ns$, which is negligible
compared to the FFT processing time.

Therefore, a frame of 1024-samples could undergo encryption and decryption in approximately 4 times the delay incurred by the calculation of one $DFT$
($2 \times DFT$ and $2 \times DFT^{-1}$). In the case of the chip mentioned
previously, this would be $92.8 \mu s$ (when using a complex input).
This delay is negligible compared to low-layer cryptosystems such as
those used in wireless LAN security \cite{hayajneh12:wlansec}.

Should an even shorter latency be necessary, it is possible to reprogram such chips
for a smaller frame size \cite{analog05:fftpoints} and get an even more significant decrease
in the FFT's computation time.

In some cases, there is no choice but to use slower hardware. Doing so means a compromise of either encrypting a smaller bandwidth (perhaps even smaller than the original signal) or to decrease the sampling rate, thereby decreasing the quality of the passing signal.
A solution to this problem could be the use of several VPSC encrypters and
decrypters in parallel. With such a configuration, each of the parallel devices would be in charge of processing
a particular band of the signal. Fig.~\ref{fig:parallelvpsc} illustrates an example of such a setup.

\begin{figure}[h]
	\centering
	\includegraphics[width=\columnwidth]{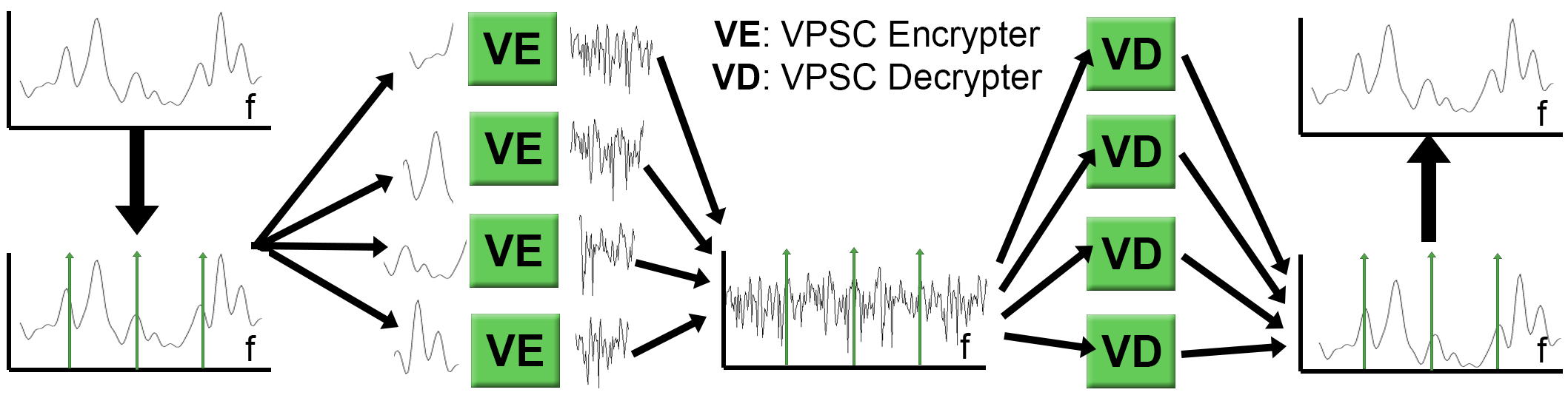}
	\caption{An illustration of how the VPSC can be used in parallel in order to encrypt an even wider bandwidth when using performance-limited hardware.}
	\label{fig:parallelvpsc}
\end{figure}

\section{Evaluation}
\label{sec:simu}
In Section~\ref{sec:cryptanalysis}, we proved that the VPSC is as secure as its CSPRNG, and theoretically unconditionally secure if a truly random key is used. However, the VPSC transforms a signal's harmonic composition, and the transformed signal may be sensitive to noise and channel interference. Therefore it is important to evaluate the VPSC's performance in practical scenarios, and not just in theory. 

In this Section, we verify the VPSC's practicality in realistic scenarios, such as both wired and wireless channels. We accomplish this by first evaluating the VPSC in a realistic channel simulator, and then by implementing the VPSC on actual hardware as a proof-of-concept.

\subsection{Simulation}
The most practical use for a physical signal cipher, is to protect channels which (1) are easily  accessible by an eavesdropper/attacker, and (2) do not require third parties to interpret and relay the signal (e.g., switches and routers). For these reasons, we evaluate the VPSC's performance in a wireless channel. 

We also note that the transference of a signal in a wireless channel is significantly more challenging than in wired channels. This is because wireless channels commonly suffer from multi-path propagation and other distortions. Therefore, evaluating the VPSC in such a channel demonstrates the VPSC's practicality and reliability in the real-world setting.

\subsubsection{Experiment Setup}
To evaluate VPSC in a wireless channel, we setup the simulation using the configuration of an LTE OFDMA mobile wireless channel. We evaluated the affect of noise and interference on the channel. Specifically, we experimented with additive Gaussian white noise (AWGN), multi-path propagation (MPP) interference, and the Doppler effect (assuming a mobile receiver). The MPP considered is based on a Rayleigh fading model which is considered as a realistic model within electrical engineering community. Rayleigh fading does not assume Line of Sight (LOS) path (in contrast to Rician fading) which is actually a very good approximation of the reality since we consider our scheme to be relevant for cellular Radio Access Network (RAN). LOS is generally usual in microwave backhaul which is typically point to point and not really relevant for our work. 

The default configurations taken for all simulations are available in Table \ref{tab:params}, unless mentioned otherwise.
Furthermore, in all simulations, we applied the VPSC's combined method of noise mitigation from Algorithms~\ref{alg:enc_combined} and~\ref{alg:dec_combined}, and set $\lambda = \left(N_0\right)^{-10}$, where $N_0$ represents the mean of the AWGN energy.

\begin{figure}[t] 
	\centering
	\captionof{table}{The parameters taken for the simulations.} 
	\label{tab:params} 
	\includegraphics[width=0.9\textwidth]{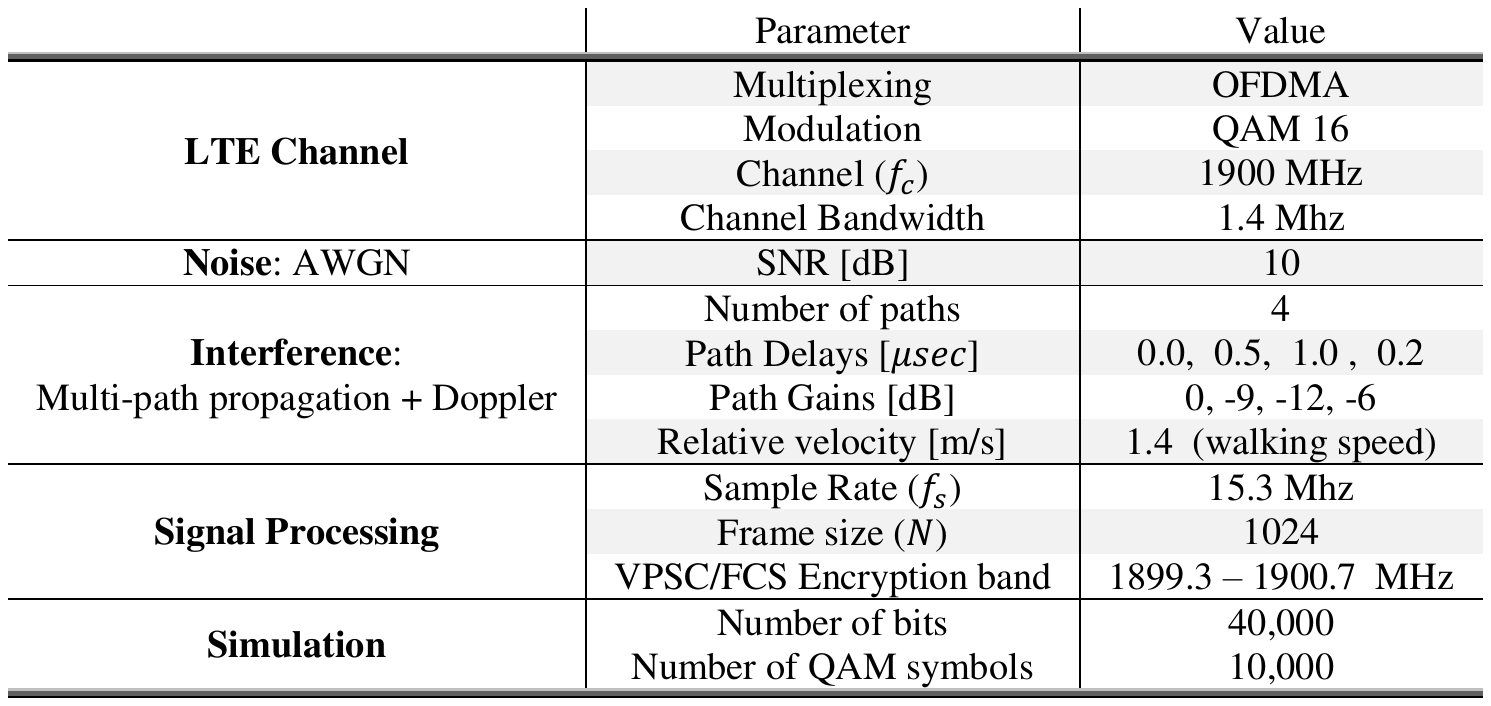}
\end{figure}

As a baseline comparison, we evaluated the VPSC to an unencrypted channel, and three other physical signal ciphers: FCS, ALM, and RSA (a description of these algorithms can be found in Section~\ref{sec:relatedwork}). For the VPSC and FCS, we encrypted the channel's bandwidth only, (tho other methods must encrypt the entire spectrum). For each simulation, we performed the following steps:

\vspace{1em}
\noindent\textbf{Transmitter}
\begin{enumerate}
	\item Generate four random bits.
	\item Modulate the four bits into a 16-QAM symbol.
	\item Generate the time-signal on carrier $f_c$ based on the 16-QAM symbol for $66.7\mu sec$ (the LTE OFDMA symbol duration).
	\item Encrypt the signal using one of the physical signal ciphers.
\end{enumerate}
\noindent\textbf{Channel}
\begin{enumerate}
	\item[5] Transmit the signal over a channel with AWGN and/or MPP. 
	\item[6] Buffering of the signal is applied to perform MPP and the Doppler effect.
\end{enumerate}
\noindent\textbf{Receiver}
\begin{enumerate}
	\item[7] Decrypt the received signal.
	\item[8] Demodulate the 16-QAM symbol from the decrypted signal using a \textit{hard decision} boundary.
	\item[9] Update the BER.
	\item[10] Return to step 1 unless 10,000 symbols have been received.
	
\end{enumerate}

\subsubsection{Experiment Results}
In Fig.~\ref{fig:BERgw_mpp}, we present the BER plots for all methods of encryption after passing through an AWGN channel, and a channel with both AWGN and MPP. The figure shows that the VPSC is robust to AWGN up to about 8dB SNR. Moreover, the VPSC is more robust than all other methods in the case of MPP, making the VPSC a better choice for wireless channels. We note that the VPSC performs better than an unencrypted signal up to 6dB SNR. This is because the combined noise mitigation technique adds some energy to the signal.  

We also note that the FCS is robust to AWGN, but not to MPP. This is because FCS shifts the majority of the signal's energy to the left or right of the carrier wave, thus increasing the interference of MPP and the Doppler effect across the channel's band. 

The ALM and RSA fail completely even in the presence of a minute amount of noise (24 dB SNR). Only at 300 dB SNR (i.e., \textit{negligible} noise) do these ciphers achieve perfect decryption. The reason ALM is sensitive to noise is because of the logarithm operations which it performs to encrypt signal samples. This increases the impact of the noise exponentially during the decryption process. The RSA method fails in the presence of noise since RSA function acts like a hash map. Therefore, a slight change to the key (input sample) can alter the output significantly.

In Fig.~\ref{fig:constl}, we provide a 16-QAM constellation plot of the demodulated symbols, after traversing various channels, and after various encryptions. We do not plot the results of ALM and RSA since the results are not informative (see Fig.~\ref{fig:BERgw_mpp}).

In Fig.~\ref{fig:BERmpp}, we examine the effect of the propagation delay in a MPP channel. In this simulation, we scaled up the propagation delays of the signal paths (Table~\ref{tab:params}). As a result, interference increases when there are symbols that overlap (ISI) which can be seen in the plot. The figure shows that the VPSC is nearly as robust to MPP as the original signal (without encryption). We note that FCS never has a `peak' of interference, but rather always stays at a non-zero BER. This is because most of the signal's energy falls on a single frequency component, and the random shift of this component's location helps mitigate ISI.

\begin{figure}
	\centering
	\includegraphics[width=0.48\textwidth]{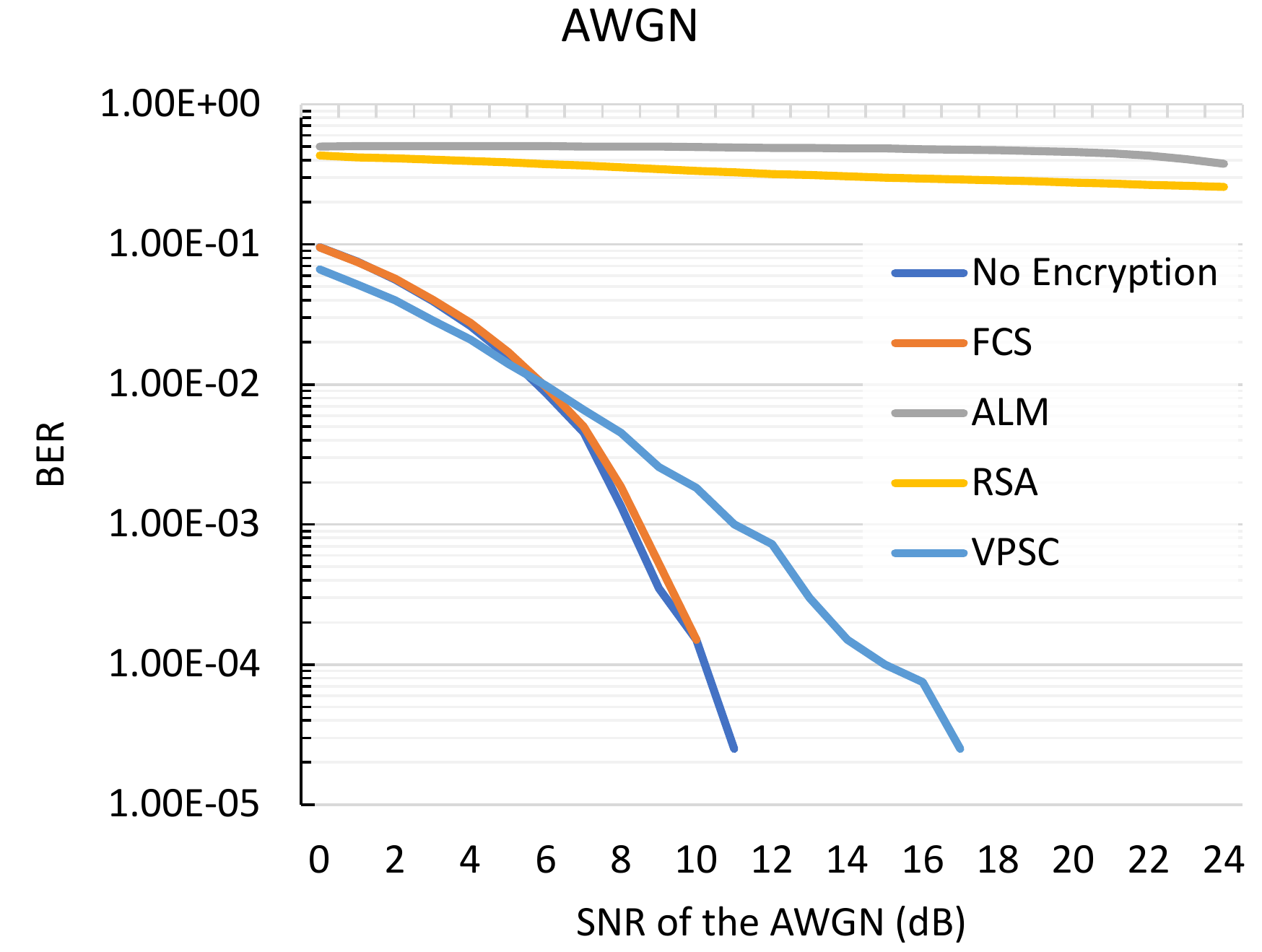}
	\includegraphics[width=0.48\textwidth]{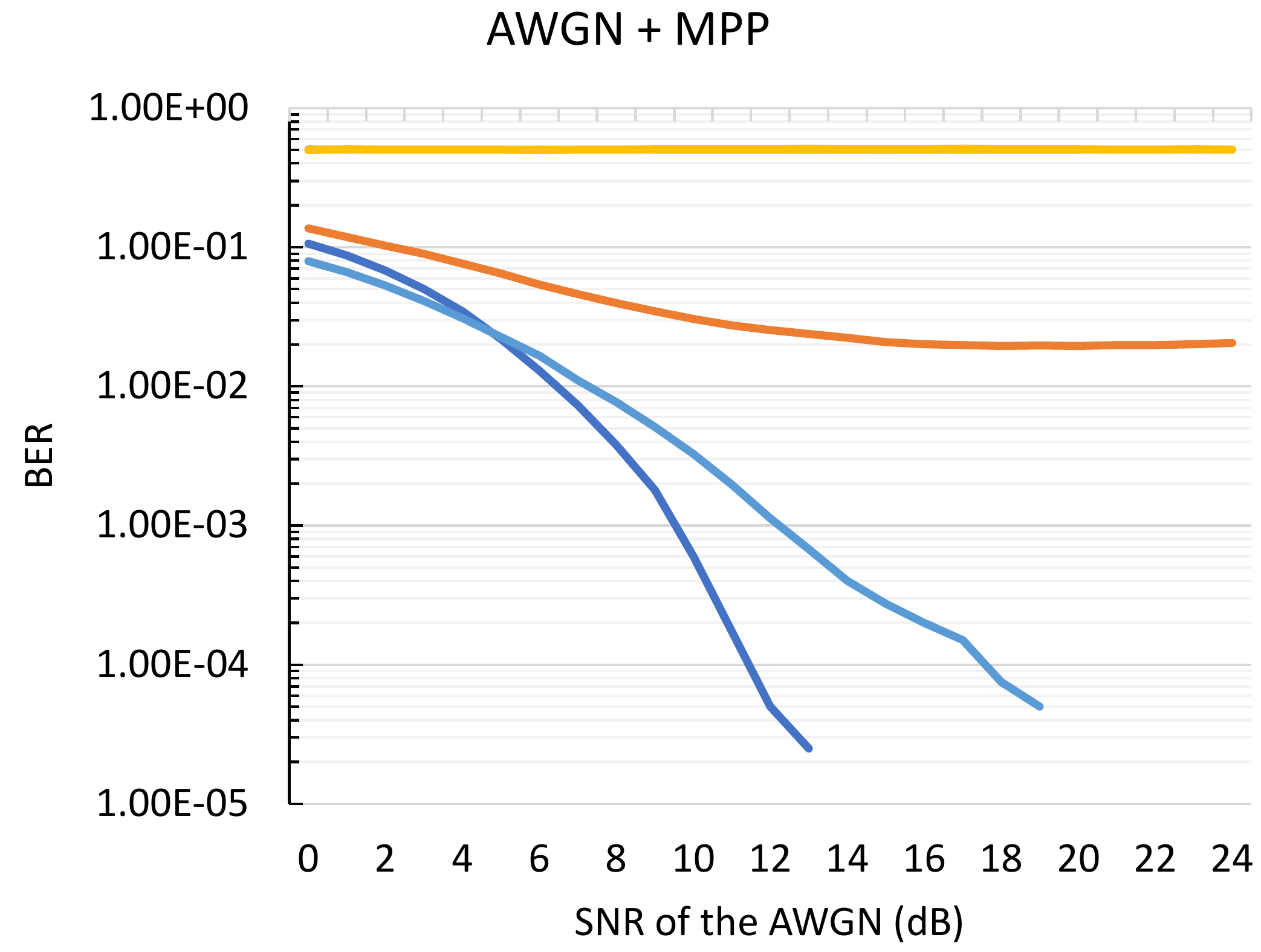}
	\caption{The bit error rate plots for all ciphers when introduced to AWGN (left) and both AWGN with multipath propagation and Doppler effect (right).}
	\label{fig:BERgw_mpp}
\end{figure}

\begin{figure}
	\centering
	\includegraphics[width=\textwidth]{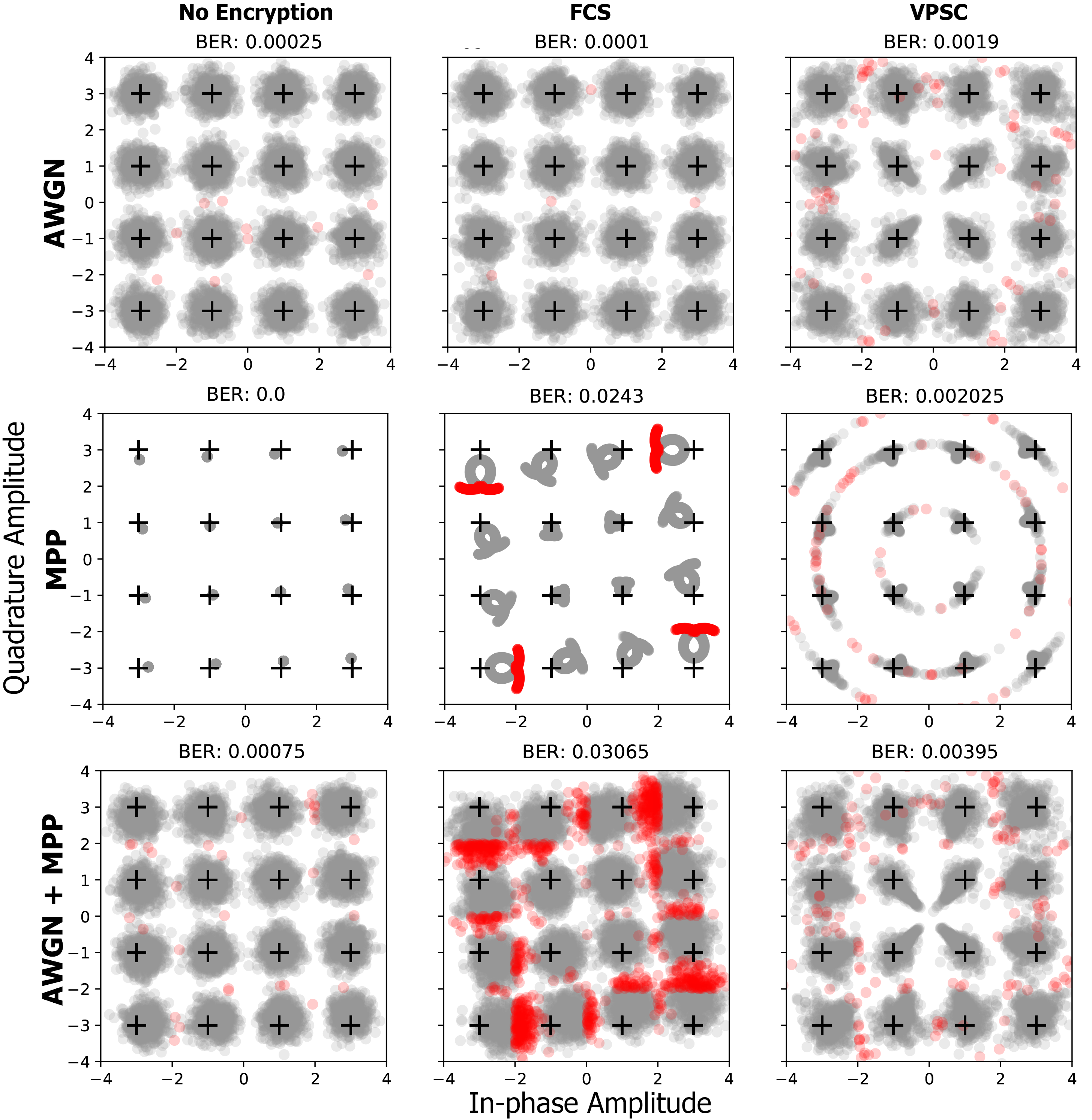}
	\caption{QAM-16 constellation plots showing the deciphered and demodulated symbols after passing through a 1900 MHz LTE OFDMA channel, with various types of noise and interference. The plots show the symbols which were correctly (gray) and incorrectly (red) demodulated. The `+' symbol marks the expected symbol locations.}
	\label{fig:constl}
\end{figure}

\begin{figure}[t]
	\centering
	\includegraphics[width=\textwidth]{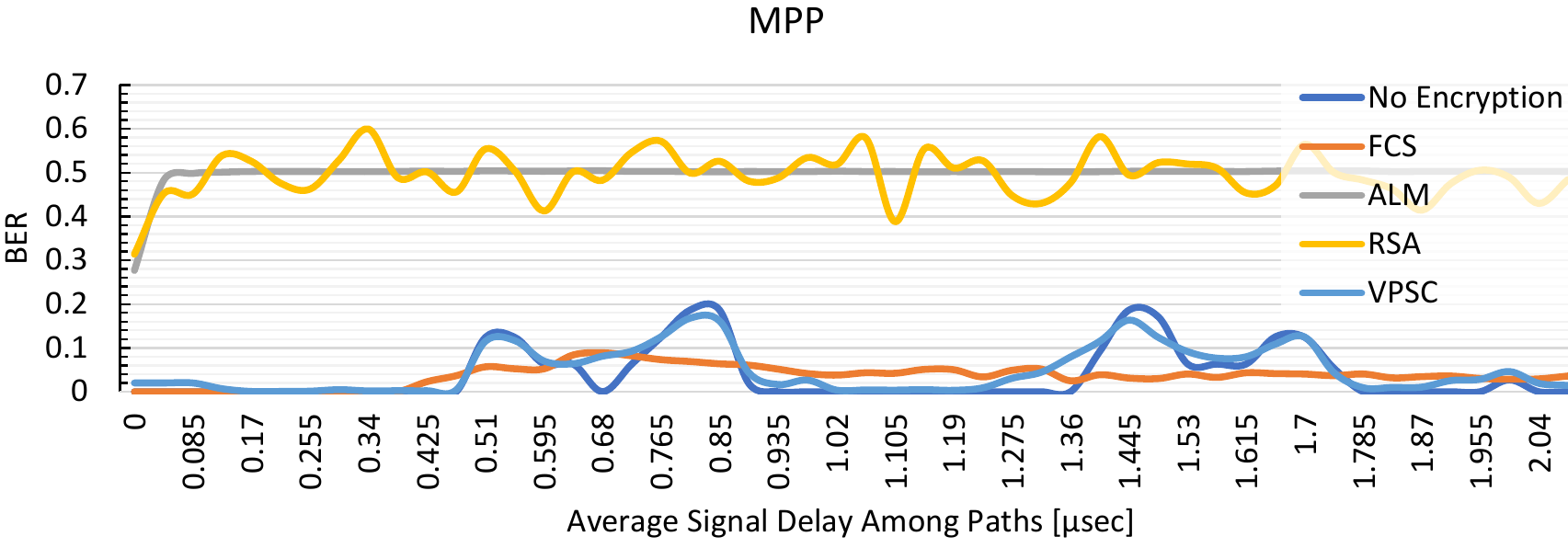}
	\caption{The bit error rate plots for all ciphers when introduced to multipath propagation and Doppler effect, with increasing propagation delays.}
	\label{fig:BERmpp}
\end{figure}

In order to measure the secrecy of the signal ciphers, we performed autocorrelations on the 16-QAM cryptograms (encrypted signals) generated from each of the ciphers. An autocorrelation measures the correlation of a signal with itself at various offsets. By measuring this self similarity, we can see evidence on whether or not a correlation attack may be performed. If a signal reveals no information about its contents, then the signal is essentially white noise. The autocorrelation of while noise is extremely low at every offset, except for when the signal overlaps itself completely.

In Fig.~\ref{fig:corr}, we present the autocorrelation of a 16-QAM signal, encrypted by each of the signal ciphers three times: each time with a different key. The plots reveal that the VPSC has the same autocorrelation as random noise. This makes sense since each of the FFT's frequency components holds a random magnitude and phase as a result of the selected key. The figure also shows that the FCS does not protect the channel's content, but rather is only obfuscates it. This is also apparent from the spectrum of the FCS's cryptograms, illustrated earlier in Fig.~\ref{fig:spectrum}. The ALM cipher provides a relatively good encryption since it resembles white noise. However, the ALM has \textit{spikes} in its autocorrelations. This means that some information is being leaked. This imperfection is likely due to the fact that ALM multiplies each sample with a value which is never zero. As a result, the cryptogram's distribution can reveal a portion of the contents. Given enough cryptograms, it may be possible to correlate out the ciphered signal. 

Finally, and surprisingly, the RSA method fails to protect the signal's contents. The figure shows that the original signal is significantly disclosed by the encrypted signal. This is due to an oversight in the physical signal cipher's implementation. Specifically, the RSA method encrypts the samples with the same private key. As a result, all samples with the same value are mapped to the same location after encryption. Although this process obfuscates the ciphered signal, some of its original frequencies are still retained. An illustration of this effect can be seen in Fig.\ref{fig:rsaFail}.

The reason the RSA method uses the same key is because it is computationally very expensive to generate (find large prime numbers) such a large amount of private keys and exchange the new public keys with the receiver.

\begin{figure}[t]
	\centering
	\includegraphics[width=\textwidth]{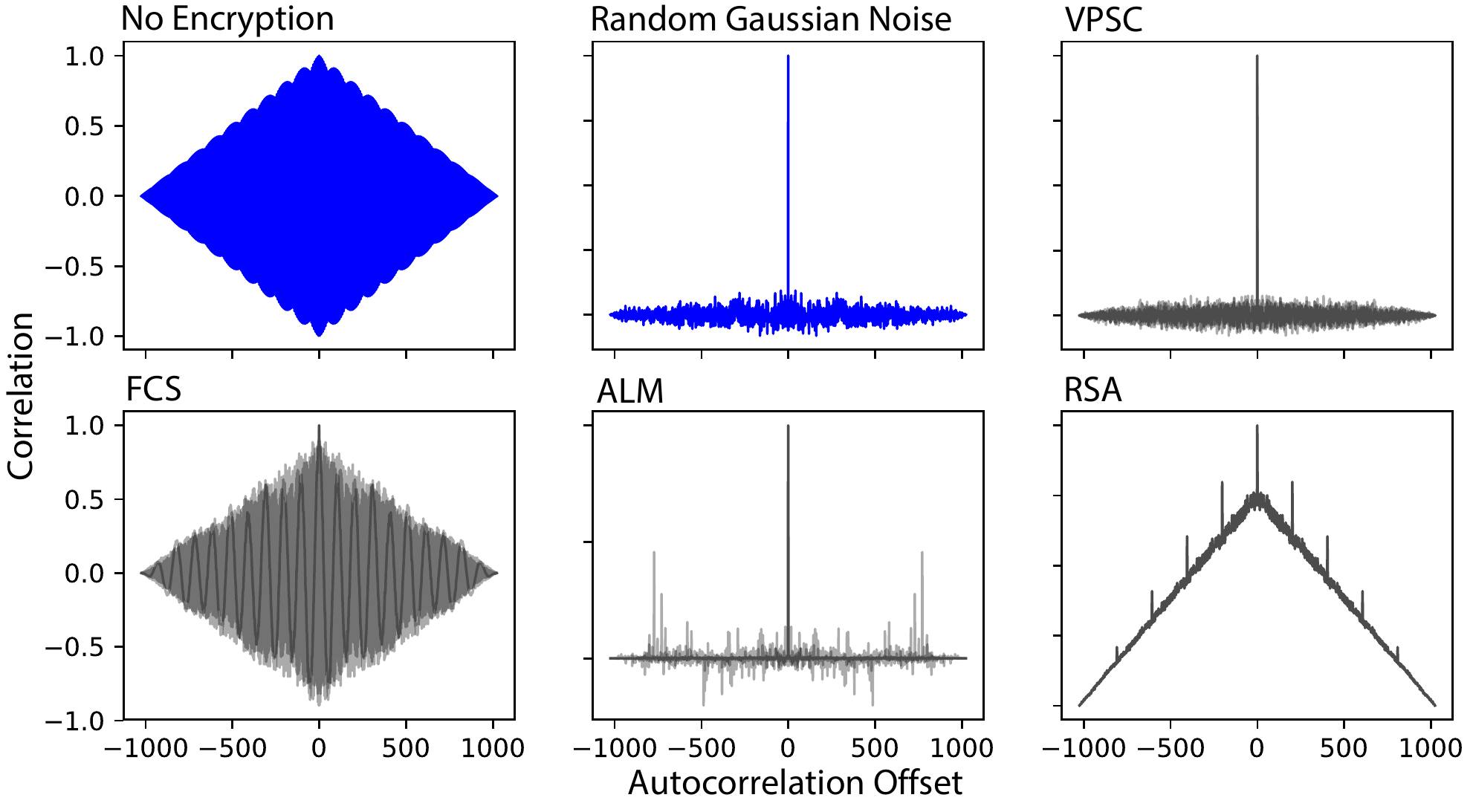}
	\caption{The autocorrelation of a 16-QAM signal, encrypted by each of the ciphers three times (each time with a different key).}
	\label{fig:corr}
\end{figure}
\begin{figure}[t]
	\centering
	\includegraphics[width=\textwidth]{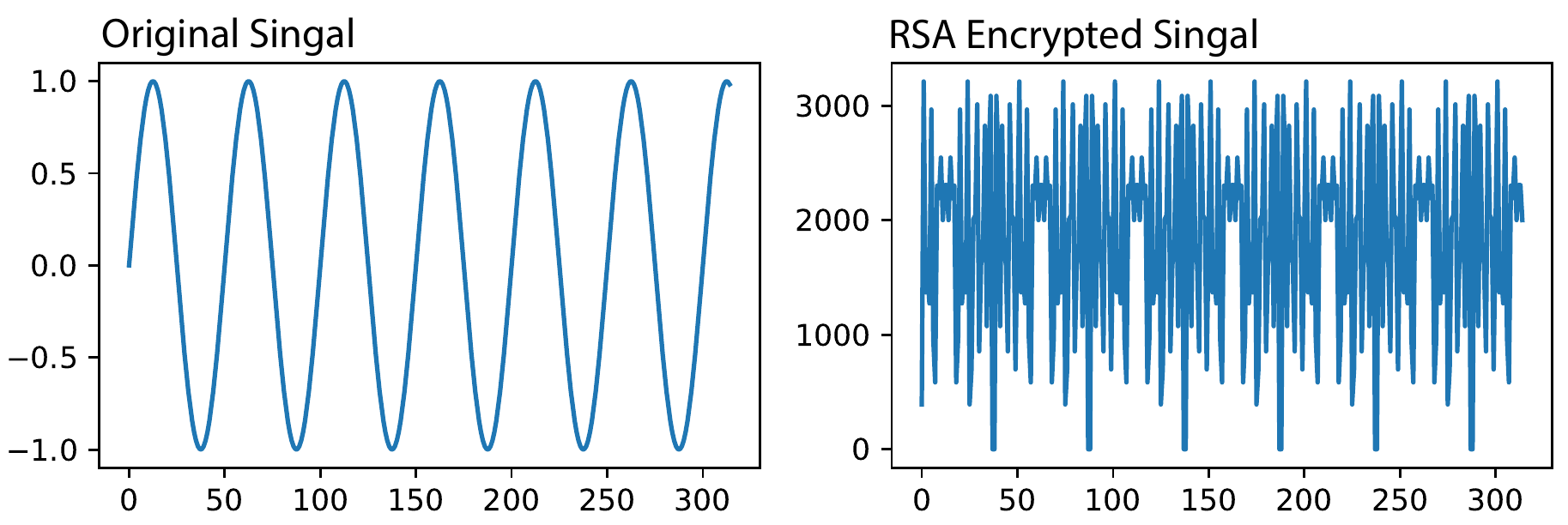}
	\caption{The RSA method's failure demonstrated by a sine wave on the left (plaintext) and the encrypted RSA signal on the right (ciphertext).}
	\label{fig:rsaFail}
\end{figure}

In summary, the VPSC is suitable for encrypting physical signals which traverse real-world channels. The VPSC is significantly more secure than other physical signal ciphers.

\subsection{Proof of Concept}
In order to fully observe and understand the VPSC, a simple prototype
was constructed and evaluated. The goal of the prototype was to provide a
proof of concept that would also illuminate any overlooked issues with
the VPSC's theory. Having said this, the processing speed (or amount of
processable bandwidth) of the devices used was not a critical part of
the evaluation.

The VPSC prototype was implemented across two Arduino Due development
boards. These boards embed 32-bit ARM core microcontrollers clocked at
84 MHz with 512 KB of flash memory and numerous embedded digital I/O pins.
One board was designated as the encrypter and the other as the decrypter.
A simple wire was used as the communications medium between the two boards.
The boards contain both digital-to-analog converters (DAC) and analog-to-digital converters (ADC), which were used for transmitting and receiving signals respectively.

In order to start the prototype, each board was given the same initial
configuration~\eqref{eq:config}. The time delay inference algorithm from Subsection~\ref{sec:timedelayinfer} was verified in MATLAB by using captured samples from the boards (Fig.~\ref{fig:avgautocorrelations}). With these samples, the algorithm correctly found and decrypted each frame.


The software developed to implement the VPSC was programmed in
object-oriented C++. Open-source libraries were used for the
Fast Fourier Transforms and the PRNG (SHA-1 as a counter-mode cipher).

\begin{figure}[t]
	\centering
	\includegraphics[scale=0.25]{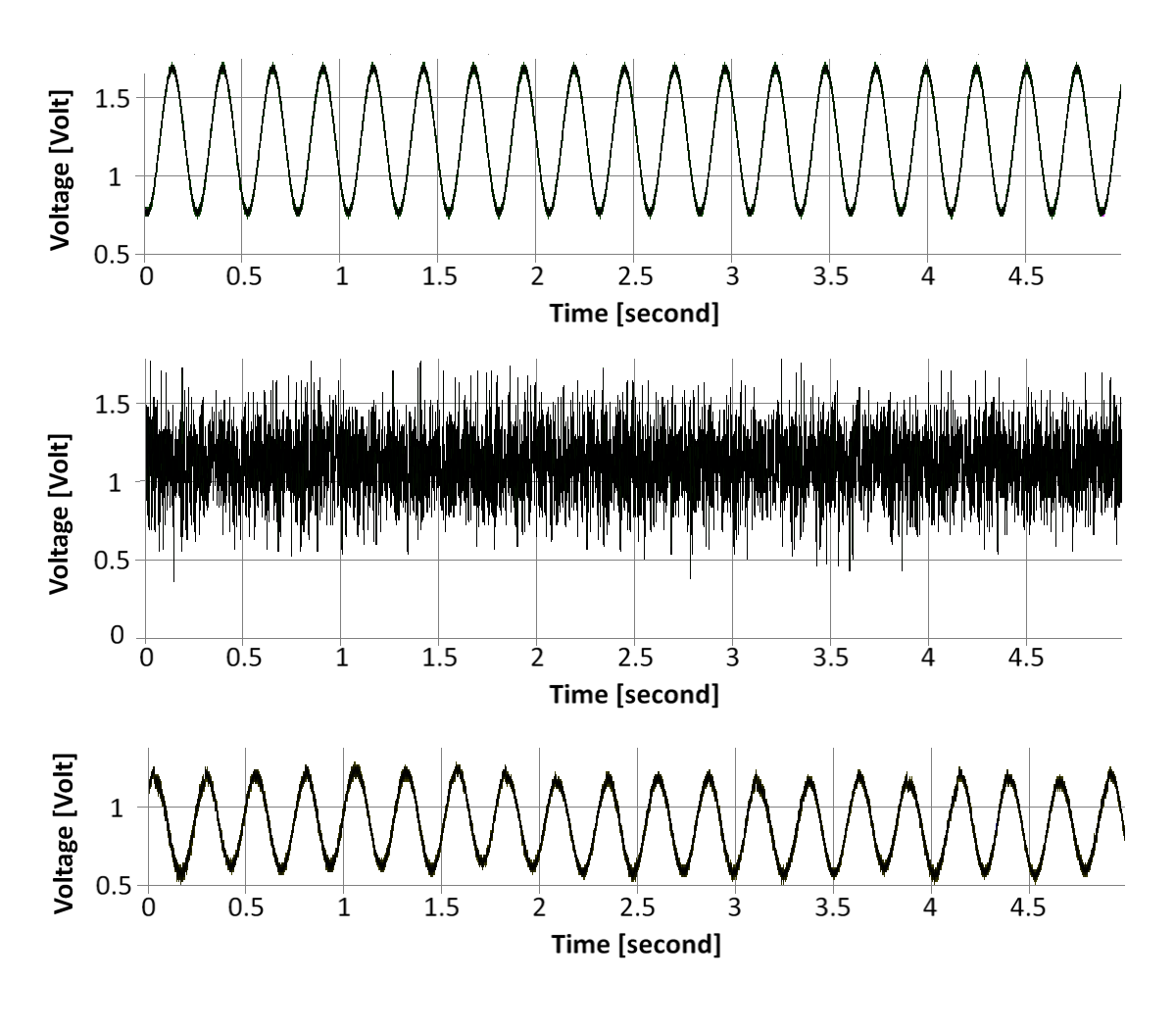} 
	\caption{The encryption and decryption of a 4 Hz sine wave. Top: the original signal $\vec{s}$ provided to the encryption board. Middle: The encrypted signal $\vec{s}'$ intercepted from the channel. Bottom: The decrypted signal delivered by the receiving board.}
	\label{fig:encdecsignal}
\end{figure}

As shown in Fig.~\ref{fig:encdecsignal}, the prototype was able to
successfully reconstruct a sine wave with an amplitude of 1 V and a frequency
of 4 Hz. The sample rate $f_s$ was 1 kHz and the frame size $N$ was 256 bits.
In this case, a sine wave was used as the source signal, with common
sine-based modulation schemes in mind.
By using superior hardware and a dedicated DSP chip, it is possible to process
larger bandwidths than the prototype's. This is particularly true due to
the computationally heavy Fourier transform purely implemented in software. This is discussed further in the
following section.  However, in our case the bottleneck was caused by the ADC/DAC components, which were only able to operate at a maximum sample rate of 2.2kHz.

\section{Discussion}
\label{sec:discussion}
This paper is a first look at the first physical signal cipher of its type. As with most new ideas, discussion can open up new avenues for further investigation and improvement. In this section we discuss some of the advantages and disadvantages of the VPSC. In addition, we also propose future work.

\subsubsection*{Key sharing}
A well-known problem with symmetric cryptographic systems is the issue of key sharing. For this reason, public-key (asymmetric) encryption has been largely favored in its place. Asymmetric encryption is the method where a public key is used by anyone to encrypt data, but only the holder of the private key can decrypt it. Symmetric cryptosystems such as the VPSC require that both communication parties have the same keys. It is clear that this can be an issue when initializing a connection over a public channel.

However, this issue may be avoided if the VPSC is converted into a hybrid cryptosystem; employing both symmetric and asymmetric cryptography. In this case the parties share a key (or initial seed) over a public channel using asymmetric cryptography. Afterwards, the key is used by both parties to initiate the VPSC. Of course, such a dialog would require the development of a protocol between two VPSC endpoints across a single link.

\subsubsection*{Security weaknesses}
When it comes to the time-tested Vernam cipher, there is one inherent weakness: the strength of the key generator.

For practical reasons, the prototype of the VPSC was implemented using a pseudo-random number generator (PRNG) as a means of symmetrically generating keys.
As discussed in Section~\ref{sec:cryptanalysis},
it is clear that the VPSC's security strength is completely dependent on the PRNG being used, because it directly inherits its weaknesses. Therefore, it is critical that a cryptographically secure PRNG (CSPRNG) be used, so that no statistical correlations show through the encrypted signal.

However, in order to use the direct-synchronization algorithm presented in Section~\ref{sec:sigsync}, the key generator must operate in counter (CTR) mode of operation. Since many CSPRNGs use feedback mechanisms in order to increase their randomness, it is difficult to determine the current key (one would have to compute $t$ steps, one for each key produced until now). Therefore, we recommend using a CSPRNG based on cryptographically secure hash functions (CSHF). This way, an attempt to find a sequential counter from a random number can be done in $O(1)$ time.

Another approach is to obtain the same key at both endpoints by sampling the channel noise \cite{chnlNoiseKey1,chnlNoiseKey2}. Implementation of this technique would be a practical option considering that the VPSC is already on \textit{Layer 1}. We leave this possibility to future work.

\subsubsection*{Channel detection}
It may be possible to use the VPSC to perform steganography (the hiding of a channel or information in plain sight). If the transmission power level is similar to that of common noise, then the transmission may go unnoticed.

An issue with the VPSC is that the frame length is detectable when analyzing the encrypted signal. This can be achieved by observing the frequency magnitudes over a period of time. However, this should not be a concern since the security of the signal relies on the information held by the frequencies and not the signal quality in which they were encrypted. A good rule of thumb when using the VPSC is to encrypt a band larger than the original signal's. Doing so will hide the details of the encrypted channels, much like a large curtain blocking a small object.

An alternative usage for the VPSC is as a signal jammer. If the encrypted band is wide and the transmission power is strong, it is possible to deploy an encrypted channel while all others are blocked by strong noise.

\subsubsection*{Bandwidth and wireless applications}
One of the major benefits of using the VPSC is its ability to retain the original signal bandwidth after encryption. Of course, due to hardware constraints, this limits the system to certain maximal bandwidths. One solution to this issue (as discussed in Section~\ref{sec:performance}) is to split the target band into segments, and then process them with independent parallel VPSC processors. We view this as a great advantage of the VPSC due to the fact that it operates on the frequency place.

The preservation of bandwidth makes the VPSC suitable for bandwidth-limited applications. Such applications include the wireless domain (including cellular and satellite communications) due to their shared communication medium. Spectrum in the world of wireless communications is an expensive commodity. Therefore any method for encrypting a signal without adding extra bandwidth is desirable.
In addition, encryption on higher layers sometimes results in overhead. This means that using the VPSC instead of a higher layer encryption scheme may in fact increase the amount of useful information (goodput) delivered by a channel.

\subsubsection*{Power}
One of the significant disadvantages with the VPSC is its energy consumption.
Due to the nature of the VPSC's encryption process, the encrypted signal's
frequencies each have an expected energy level of about $\phi/2$ watts.
Therefore, even a \textit{flat} signal with no power on any frequency will
have an increased power level once encrypted.
This means that when choosing to use the VPSC, one must consider the trade-off
between higher security and a lower transmission power level.

Adversely, this can be used as an advantage. If it is known in advance that
the source signal has a higher average frequency magnitude than $\phi/2$,
then the transmission power may be decreased.



\section{Conclusion}
\label{sec:conclusion}

In some cases there is a need to provide the highest level of security to
communications systems.
By using the VPSC, it is possible to encrypt any waveform signal to a high degree of secrecy while maintaining the same amount of bandwidth. Operation on the frequency domain has many advantages such as parallelization on the hardware level.
The level of security provided by the VPSC is
dependent on the strength of the PRNG which it uses (as with all one-time pads).
Moreover, we have proposed two techniques to ensure the stability of the system
when noise is introduced, and recommended using the combination of them both.

To evaluate the VPSC, we implemented three other known physical signal ciphers. We then simulated an LTE OFDMA wireless channel, with noise and interference, and measured the ciphers' performance in carrying a 16-QAM modulation. Furthermore, we explored the secrecy of each of the ciphers by examining the autocorrelations of their respective encrypted signals. Our evaluations demonstrated that the VPSC is not only the most suitable physical signaling cipher in noisy channels, but also the most secure.

The bottom line is that the VPSC offers a powerful method of encrypting any
waveform signal (as well as complex), with a trade-off between security and power efficiency.

\bibliographystyle{unsrt} 
\bibliography{IEEE_Forensics_Journal}
%

\parpic{\includegraphics[width=1in,height=1.25in,clip,keepaspectratio]{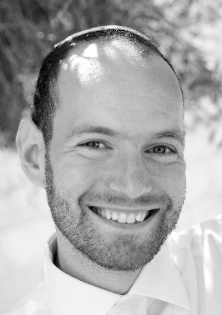}}
\noindent {\bf Yisroel Mirsky} is a post doctoral fellow in the Institute for Information Security \& Privacy at Georgia Tech (Georgia Institute of Technology). He received his PhD from Ben-Gurion University in 2018 where he is still affiliated as a security researcher. His main research interests include online anomaly detection, adversarial machine learning, isolated network security, and blockchain. Yisroel has published his research in some of the best cyber security conferences: USENIX, NDSS, Euro S\&P, Black Hat, DEF CON, CSF, AISec, etc. His research has also been featured in many well-known media outlets (Popular Science, Scientific American, Wired, Wall Street Journal, Forbes, BBC…). One of Yisroel's recent publications exposed a vulnerability in the USA's 911 emergency services infrastructure. The research was shared with the US Department of Homeland Security and subsequently published in the Washington Post.  
\\
\parpic{\includegraphics[width=1in,height=1.25in,clip,keepaspectratio]{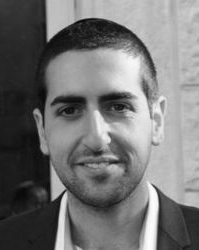}}
\noindent {\bf Benjamin Fedidat} received his B.Sc. in Communications from the Jerusalem College of Technology in 2013. He is now an M.Sc student in Computer Science at Bar-Ilan University (BIU) and a Systems programmer.\\
\vspace{3.5em}

\parpic{\includegraphics[width=1in,height=1.25in,clip,keepaspectratio]{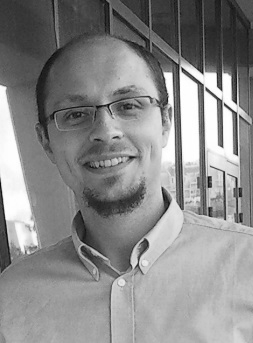}}
\noindent {\bf Dr. Yoram Haddad} received his B.Sc, Engineer diploma and M.Sc (Radiocommunications) from SUPELEC (leading engineering school in Paris, France) in 2004 and 2005, and his  PhD in computer science and networks from Telecom ParisTech in 2010. He was a Kreitman Post-Doctoral Fellow at Ben-Gurion University, Israel between in 2011-2012. He is currently a tenured senior lecturer at the Jerusalem College of Technology (JCT) in  Jerusalem, Israel. Yoram's main research interests are in the area of Wireless Networks and Algorithms for networks. He is especially interested in energy efficient wireless deployment, Femtocell, modeling of wireless networks, device-to-device communication, Wireless Software Defined Networks (SDN) and technologies toward 5G cellular networks.

\end{document}